\newtheorem{theorem}{Theorem}[section]
\newtheorem{lemma}[theorem]{Lemma}
\newtheorem{claim}[theorem]{Claim}
\newenvironment{clproof}{\begin{list}{}{
			\setlength{\leftmargin}{3mm}
		} \item {\it Proof.} }{\hfill$\lozenge$\end{list}}
\newcommand\extrafootertext[1]{
    \bgroup
    \renewcommand\thefootnote{\fnsymbol{footnote}}
    \renewcommand\thempfootnote{\fnsymbol{mpfootnote}}
    \footnotetext[0]{#1}
    \egroup
}
\newcommand\lab{\operatorname{lab}}
\newcommand\full{\operatorname{full}}
\newcommand\CW{\operatorname{CW}}
\newcommand\NLC{\operatorname{NLC}}
\author[1]{Shinwoo An} 
\author[2]{Yeonsu Chang}
\author[3]{Kyungjin Cho}
\author[2,4]{O-joung Kwon\thanks{Corresponding author}}
\author[2]{Myounghwan Lee}
\author[5]{Eunjin Oh}
\author[5]{Hyeonjun Shin}
\affil[1]{Department of Computer Science, Bar-Ilan University, Ramat Gan,~Israel}
\affil[2]{Department of Mathematics, Hanyang University, Seoul,~South~Korea}
\affil[3]{Institute of Algorithms \& Theory, Graz University of Technology, Graz,~Austria}
\affil[4]{Discrete Mathematics Group, Institute~for~Basic~Science~(IBS), Daejeon,~South~Korea}
\affil[5]{Department of Computer Science and Engineering, POSTECH, Pohang,~South~Korea}
\title{Pre-assignment problem for unique minimum vertex cover on \\ bounded clique-width graphs}
\begin{document}

\maketitle
\extrafootertext{
An extended abstract of this paper appeared in the proceedings of AAAI 2025~\cite{AnCCKLOS2025}.
Y. Chang, O. Kwon, and M. Lee are supported by the National Research Foundation of Korea (NRF) grant funded by the Ministry of Science and ICT (No. NRF-2021K2A9A2A11101617 and No. RS-2023-00211670). O. Kwon is also supported by the National Research Foundation of Korea (NRF) grant funded by Institute for Basic Science (IBS-R029-C1).
E. Oh and H.Shin are supported by Institute of Information \& Communications Technology Planning \& Evaluation (IITP) grant funded by the Korea government (MSIT) (No. RS-2024-00440239, Sublinear Scalable Algorithms for Large-Scale Data Analysis) and the National Research Foundation of Korea (NRF) grant funded by the Korea government (MSIT) (No. RS-2024-00358505).
}

	\extrafootertext{E-mail addresses: 
 \texttt{shinwoo.an@biu.ac.il} (S. An), \texttt{yeonsu@hanyang.ac.kr} (Y. Chang), 
 \texttt{kyungjin.cho@tugraz.at} (K. Cho),  \texttt{ojoungkwon@hanyang.ac.kr} (O. Kwon),
 \texttt{sycuel@hanyang.ac.kr} (M. Lee),
 \texttt{eunjin.oh@postech.ac.kr} (E. Oh), and \texttt{tlsguswns119@postech.ac.kr}
 (H. Shin)  }
\begin{abstract}
Horiyama et al.\ (AAAI 2024) considered the problem of generating instances with a unique minimum vertex cover under certain conditions. The \textsc{Minimum Pre-assignment for Uniquification of Minimum Vertex Cover} problem (shortly \textsc{Min PAU-VC}) is the problem, for given a graph $G$, to find a minimum set $S$ of vertices in $G$ such that among all minimum vertex covers of $G$, exactly one contains $S$. We show that \textsc{Min PAU-VC} is fixed-parameter tractable parameterized by clique-width, which improves an exponential algorithm for trees given by Horiyama et al. Among natural graph classes with unbounded clique-width, we show that the problem can be solved in linear time on split graphs and unit interval graphs.
\end{abstract}

\section{Introduction}

Designing AI algorithms to tackle NP-hard graph problems has become a prominent trend in the field of artificial intelligence. The inherent complexity of NP-complete problems presents a significant challenge, making them an ideal testbed for AI-driven approaches that aim to push the boundaries of what can be achieved in terms of efficiency and scalability. 
To evaluate the performance of those AI algorithms, it is essential to have robust benchmark datasets. Such datasets provide a controlled environment where the strengths and weaknesses of different algorithms can be systematically analyzed. 
As constructing a benchmark dataset is a critical aspect of AI research, 
several well-known benchmark datasets were presented such as TSPLIB, UCI, SATLIB, and DIMACS for various NP-hard combinatorial problems~\cite{reinelt1991tsplib,asuncion2007uci,hoos2000satlib}.

However, it seems hard to use them to evaluate the performances of AI algorithms for the \emph{uniqueness version} of combinatorial problems where a solution is unique. 
In several problems, the presence of a unique solution can lead to more efficient algorithms~\cite{THOMASON1978259,gabow1999unique}. Also, algorithms for the unique SAT problem are used as subroutines for its search version~\cite{scheder2017ppsz,hertli20143}.
Due to these reasons, the uniqueness version also has been extensively studied from both theory and practice~\cite{calabro2008complexity,hertli2014breaking}.
Therefore, generating graphs with a unique solution offers a valuable addition to benchmark datasets, enabling a more thorough evaluation of AI-driven solvers for the uniqueness version of combinatorial problems.

One natural approach for generating graphs with a unique solution is to make use of graphs in well-known benchmark datasets. More specifically, we choose a graph $G$ in a well-known benchmark dataset, and pre-assign a part of $G$ so that only one solution is consistent with this assignment. This \emph{pre-assignment for uniquification} has been studied for classic NP-hard problems such as the coloring and clique problems~\cite{harary2007computational}, the dominating set problem and its variants~\cite{MR1480799,bozeman2019restricted,ferrero2018relationship} and the vertex cover problem~\cite{horiyama2024theoretical}.
Also, several pencil/video puzzles such as SUDOKU and Picross 3D have been studied in the context of the pre-assignment for uniquification~\cite{demaine2018fewest,kimura_et_al:LIPIcs.FUN.2018.25,TjusilaT2024}.

\medskip 
In this paper, we focus on the \textsc{Pre-assignment for uniquification of Minimum Vertex Cover (PAU-VC)} problem
introduced by Horiyama et al.~\cite{horiyama2024theoretical}, and its optimization version. A set $S$ of vertices in a graph $G$ is called a \emph{vertex cover} of $G$ if $S$ meets all edges of $G$. The formal definition of \textsc{PAU-VC} and \textsc{Min PAU-VC} are the following.

\medskip 
\noindent
\fbox{\parbox{\textwidth}{
	\textsc{PAU-VC}\\
	\textbf{Input :} A graph $G$ and an integer $k$ 
	\\
	\textbf{Question :} Is there a set $S$ of at most $k$ vertices in $G$ such that among all minimum vertex covers of $G$, exactly one contains $S$? }}
\medskip

\noindent
\fbox{\parbox{\textwidth}{
	\textsc{Min PAU-VC}\\
	\textbf{Input :} A graph $G$ 
	\\
	\textbf{Question :} Find a minimum set $S$ of vertices in $G$ such that among all minimum vertex covers of $G$, exactly one contains $S$. }}
\medskip

Note that Horiyama et al.~\cite{horiyama2024theoretical} originally considered three models: INCLUDE, EXCLUDE, and MIXED. The problem we focus on corresponds to the INCLUDE model.

Notice that one can use an algorithm for \textsc{Min PAU-VC} to generate a graph with a unique solution for the vertex cover problem. Consider an arbitrary graph $G$ (possibly from a known benchmark dataset), and compute an optimal solution $S$ for \textsc{Min PAU-VC} on $G$. Since
there is a unique minimum vertex cover of $G$ containing $S$, $G-S$ has a unique minimum vertex cover as well, where $G-S$ is the graph obtained from $G$ by removing all vertices of $S$ and their incident edges. 

\medskip 
Although the pre-assignment for uniquification of the dominating set problem and its variants has been studied extensively, little is known about \textsc{PAU-VC}, except for~\cite{horiyama2024theoretical}. More specifically, 
Horiyama et al.~\cite{horiyama2024theoretical} proved that \textsc{PAU-VC} is $\Sigma_{2}^P$-complete on general graphs, and NP-complete on bipartite graphs. On the positive side, they provided an algorithm that runs in time $\mathcal{O}(2.1996^n)$ for general graphs, an algorithm that runs in time $\mathcal{O}(1.9181^n)$ for bipartite graphs, 
and an algorithm that runs in time $\mathcal{O}(1.4143^n)$ for trees,
where $n$ denotes the number of vertices.
Horiyama, Seto, and Suzuki~\cite{HoriyamaSS2025} further improved the running time for bipartite graphs to $\mathcal{O}(1.4143^n)$, and also showed that the problem is NP-complete on planar bipartite graphs of maximum degree $3$.

As \textsc{PAU-VC} is $\Sigma_2^P$-complete and NP-complete for general graphs and bipartite graphs, respectively, it is unlikely to admit polynomial-time algorithms for either general or bipartite graphs. However, the time complexity for trees remains an open question. 
In fact, Horiyama et al.~\cite{horiyama2024theoretical} also mentioned this explicitly: \emph{``Many readers might consider that \textsc{PAU-VC} for trees is likely solvable in polynomial time. On the other hand, not a few problems are intractable (e.g., \textsc{Node Kayles}) in general, but the time complexity for trees still remains open, and only exponential-time algorithms are known. In the case of \textsc{PAU-VC}, no polynomial-time algorithm for trees is currently known.''}

\medskip
In this paper, we resolve this open problem by presenting a polynomial-time algorithm for  \textsc{PAU-VC} on trees, which significantly improves the exponential-time algorithm by Horiyama et al. 
Moreover, we showed that it can be extended to classes of bounded \emph{clique-width}~\cite{CourcelleO2000}.  
Clique-width is a graph parameter that measures the complexity of constructing a graph using a set of specific operations, including the creation of new vertices, disjoint union of graphs, relabeling of vertex labels, and connecting vertices based on their labels. 
Trees have clique-width at most $3$~\cite{CourcelleO2000} and complete graphs have clique-width at most $2$. 
A precise definition will be given in Section~\ref{sec:prelim}. 

More precisely, we prove the following theorem. 
In parameterized complexity, an instance of a parameterized problem consists in a pair $(x,k)$, where $k$ is a secondary measurement, called the parameter. A parameterized problem $Q\subseteq \Sigma^*\times \mathbb{N}$ is \emph{fixed-parameter tractable (FPT)} if there is an algorithm which decides whether $(x,k)$ belongs to $Q$ in time $f(k)\cdot |x|^{\mathcal{O}(1)}$ for some computable function $f$.

\begin{theorem}\label{thm:maincliquewidth}
    \textsc{Min PAU-VC} is fixed-parameter tractable parameterized by clique-width.
\end{theorem}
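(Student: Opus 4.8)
The plan is to design a dynamic programming algorithm over a $k$-expression (a clique-width expression with $k$ labels), following the standard paradigm for clique-width DP but with a carefully chosen set of states that captures enough about partial solutions to reconstruct whether a pre-assignment forces uniqueness. First I would recall that, given a graph of clique-width at most $k$, one can compute in FPT time a $(2^{k+1}-1)$-expression (or use an exact $k$-expression if one is given), so it suffices to work with a fixed rooted expression tree $T$. For each node $t$ of $T$ with associated labelled graph $G_t$, the subtlety is that \textsc{PAU-VC} is not merely an optimization problem: we must track, simultaneously, (i) the size of a minimum vertex cover of the whole graph, (ii) for a candidate pre-assignment set $S$, whether \emph{exactly one} minimum vertex cover contains $S$, and (iii) how partial choices on $G_t$ interact with the rest of the graph through the $k$ labels.

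The key modelling step is to observe that ``there is a unique minimum vertex cover containing $S$'' can be rephrased as: $S$ is contained in some minimum vertex cover $C$, and for every vertex $v \notin S$, it is \emph{not} the case that both $C$ and some minimum vertex cover $C'$ with $S \subseteq C'$ disagree on $v$; equivalently, the ``ambiguous'' vertices — those that lie in some but not all minimum vertex covers extending $S$ — must all be forced into $S$. So the algorithm should, for a guessed target size $c$ of the minimum vertex cover and a guessed size $s = |S|$, decide whether there is an $S$ of size $s$ such that the set of minimum-size vertex covers of $G$ containing $S$ is a singleton. At each node $t$, the DP state would record, for each label class $i \in \{1,\dots,k\}$: the number of vertices of that label put into the partial cover; whether that label class currently contributes any ``out'' vertex (needed because a future join may create an edge that must be covered); and, crucially, a summary of the ``uniqueness obligations'' — for each label, whether there currently exists an alternative optimal partial cover that differs on a vertex of that label not yet in $S$. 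Because vertices of the same label are indistinguishable with respect to all future operations, it is enough to track these quantities per label rather than per vertex, which keeps the number of states bounded by a function of $k$ times a polynomial in $n$.

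Concretely, I would define the state at node $t$ as a tuple indexed by labels recording $(a_i, b_i, u_i)$ where $a_i$ is the number of label-$i$ vertices in the partial solution's cover, $b_i \in \{0,1\}$ indicates whether some label-$i$ vertex is left out, and $u_i \in \{0,1\}$ indicates whether the partial solution admits a ``swap'' on label $i$ producing another partial cover of the same size consistent with the partial pre-assignment; we also carry the total cover size and total pre-assignment size as separate polynomially-bounded coordinates. Transitions handle the four operations: introduce $i(v)$ creates the base cases; disjoint union $G_1 \oplus G_2$ combines two states by adding the size coordinates and taking componentwise ``or'' of the $b_i$ and $u_i$ flags (with care that a swap in one part combined with the matching structure in the other still yields a valid global alternative); relabel $\rho_{i\to j}$ merges label classes $i$ and $j$; and the join $\eta_{i,j}$ adds all edges between label-$i$ and label-$j$ vertices, which forces: if $b_i = 1$ and $b_j = 1$ the state is infeasible (an uncovered edge), and otherwise it may \emph{destroy} certain swap possibilities (a label-$i$ vertex that was out can no longer be swapped in without re-examining label $j$) — this is where the analysis is most delicate. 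At the root we accept if some feasible state has the minimum possible cover size, all $u_i = 0$ (no remaining ambiguity outside $S$) or all ambiguous vertices accounted for in $S$, and we minimize the pre-assignment-size coordinate over all such states.

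The main obstacle I anticipate is getting the ``uniqueness'' bookkeeping exactly right across the join operation: a naive per-label swap flag is not obviously closed under joins, because an alternative optimal cover may involve a coordinated exchange — moving one vertex out of the cover and another in — across several label classes at once, and a join can both create such opportunities (new edges give new ways for a cover to be minimal) and eliminate them. To make this rigorous I would prove an exchange lemma: any two minimum vertex covers of $G_t$ differ by a sequence of ``elementary swaps'' each internal to a single label class in $G_t$, so that tracking one flag per label suffices, or else enrich the state to track pairs of labels (still a bounded number). Establishing this structural fact — and verifying that it is preserved by all four clique-width operations — is the crux; once it is in place, the transitions are routine table lookups and the running time is $f(k)\cdot n^{O(1)}$, giving fixed-parameter tractability. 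Finally I would note that since trees have clique-width at most $3$, this immediately yields a polynomial-time algorithm for \textsc{PAU-VC} on trees, resolving the open question of Horiyama et al.
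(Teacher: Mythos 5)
Your overall architecture (bottom-up DP over a clique-width expression, tracking per-label-class information about partial covers together with some record of whether the pre-assignment still admits more than one optimal extension) matches the spirit of the paper's proof, and you correctly locate the crux at the join operation. But the specific state you propose does not work, and the repair you sketch is based on a false lemma. A per-label swap flag $u_i$ cannot be maintained across joins because, as you yourself observe, alternative minimum covers arise from coordinated exchanges across several label classes; your proposed exchange lemma (any two minimum vertex covers differ by a sequence of elementary swaps each internal to one label class) is false already for $C_4=K_{2,2}$ built with two labels: the two minimum vertex covers are the two sides of the bipartition, their symmetric difference meets both label classes, and no intermediate minimum vertex cover exists, so no sequence of single-label swaps connects them. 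Enriching the state to pairs of labels does not help either, since such exchanges can involve arbitrarily many classes, and in any case a bare ``swap exists'' flag does not record which labels end up fully covered in the alternative cover --- information that is essential for deciding feasibility and optimality after future joins. A secondary problem is that storing the count $a_i$ of covered vertices per label makes the table size $n^{\Omega(k)}$, which is XP rather than FPT; only the total size is needed.

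What the paper does instead is to index the uniqueness bookkeeping not by individual labels but by subsets $I\subseteq[k]$ of labels: for a labelled graph $H$ it defines a \emph{minimum vertex cover with respect to $I$} as a smallest cover whose set of fully contained label classes is exactly $I$, records its size $\mu_H(I)$, and assigns to each candidate pre-assignment $S$ a \emph{characteristic} $\beta:2^{[k]}\to\{0,1,2\}$ giving, for each $I$, the number (capped at $2$) of minimum covers with respect to $I$ that contain $S$. There are only $3^{2^k}$ characteristics, so the table stays bounded by a function of $k$, and the counts combine cleanly at a join: the number of minimum covers of $G\times_M H$ with respect to $I$ containing $S$ is a sum over ``proper splits'' $(I_G,I_H)$ of $I$ of products $\beta_G(I_G)\cdot\beta_H(I_H)$, which is exactly computable from the capped values. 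This relativisation to the full-label set $I$ --- both for minimality and for the multiplicity count --- is the missing idea in your proposal; without it the uniqueness information is not closed under the clique-width operations.
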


The notion of clique-width is closely related to the concept of \emph{tree-width}. 
Tree-width is a well-studied graph parameter which measures how close a graph is to being a tree~\cite{GMXX}. Courcelle~\cite{Courcelle1990} showed that every problem expressible in $\text{MSO}_2$-logic is fixed-parameter tractable when parameterized by the tree-width of a graph. 
However, classes of bounded tree-width must be sparse. To address this limitation, Courcelle and Olariu~\cite{CourcelleO2000} introduced clique-width to extend properties of classes of bounded tree-width to dense graph classes, such as the class of complete graphs. 

Every class of bounded tree-width has bounded clique-width~\cite{CourcelleO2000,CorneilR2005}, but there are classes of bounded clique-width and unbounded tree-width, such as the class of complete graphs or complete bipartite graphs.
Courcelle, Makowsky, and Rotics~\cite{CourcelleMR2000} showed that every problem expressible in $\text{MSO}_1$-logic is fixed-parameter tractable when parameterized by the clique-width of a graph.

Although the property of being a vertex cover is expressible in $\text{MSO}_1$, expressing that a vertex cover is minimum would require comparing its cardinality with that of every other vertex cover. Such cardinality comparisons are not available in plain $\text{MSO}_1$. Thus, \textsc{PAU-VC} is not directly captured by a standard $\text{MSO}_1$-formulation, and
the algorithmic meta theorem by Courcelle, Makowsky, and Rotics cannot be adapted for \textsc{PAU-VC}. The parameterized complexity of problems cannot be expressible by $\text{MSO}_1$-logic, such as \textsc{Hamiltonian Cycle} and \text{Graph Coloring}, have been studied~\cite{KoblerR2003,Fomin2010,Fomin2014,Fomin2019,BergougnouxKK2020}.

\medskip
One may ask whether we can further obtain polynomial-time algorithms for \textsc{Min PAU-VC} on natural classes of graphs of unbounded clique-width. We investigate two such classes. \emph{Split graphs} are graphs that can be partitioned into an independent set and a clique. \emph{Unit interval graphs} are intersection graphs of intervals of the same length on the real line. 
It is known that 
split graphs have unbounded clique-width~\cite{MakowskyR1999} and  unit interval graphs have unbounded clique-width~\cite{GolumbicR2000}. 
Split graphs and unit interval graphs are well-known graph classes that have been widely studied~\cite{corneil1995simple,hell2001fully,bertossi1984dominating}. 
We prove that \textsc{Min PAU-VC} can be solved in linear time on both classes. 

\begin{theorem}
    \textsc{Min PAU-VC} can be solved in linear time on unit interval graphs and split graphs.
\end{theorem}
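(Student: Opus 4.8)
The plan is to first recast \textsc{PAU-VC} in a form that is easier to handle on structured graph classes. If $C$ is a minimum vertex cover of $G$ with $S\subseteq C$, then $A:=V(G)\setminus C$ is a maximum independent set of $G$ disjoint from $S$, and conversely every maximum independent set $A$ of $G$ avoiding $S$ gives the minimum vertex cover $V(G)\setminus A$ containing $S$. Thus the minimum vertex covers of $G$ containing $S$ are in bijection with the maximum independent sets of $G$ disjoint from $S$, and these are exactly the maximum independent sets of $G-S$ when $\alpha(G-S)=\alpha(G)$ (and there are none when $\alpha(G-S)<\alpha(G)$), where $\alpha$ denotes the independence number. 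Hence, writing $\textsc{PAU-VC}(G)$ for the minimum size of a valid pre-assignment,
\[
\textsc{PAU-VC}(G)=\min\bigl\{\,\abs{S}\ :\ \alpha(G-S)=\alpha(G)\ \text{and}\ G-S\ \text{has a unique maximum independent set}\,\bigr\}.
\]
Since split graphs and unit interval graphs are hereditary, $G-S$ stays in the class, so it suffices, within each class, to understand when a graph has a unique maximum independent set and how few vertices must be deleted to reach such a graph without decreasing $\alpha$.

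For split graphs, fix a split partition $(I,K)$ of $G$ with $I$ independent and $K$ a clique; this is found in linear time. Every maximum independent set of a split graph is either $I$ or of the form $\{v\}\cup(I\setminus N(v))$ for some $v\in K$. Setting $d:=\min_{v\in K}\abs{N(v)\cap I}$ (with $d:=\infty$ if $K=\emptyset$), both $\alpha(G)$ and the whole family of maximum independent sets are controlled by $d$: if $d\ge 2$ then $I$ is the unique maximum independent set; if $d=0$ the maximum independent sets are exactly $I\cup\{v\}$ for $v$ in $Z:=\{v\in K:N(v)\cap I=\emptyset\}$; and if $d=1$ they are $I$ together with $(I\setminus\{u_v\})\cup\{v\}$ for $v$ in $W:=\{v\in K:\abs{N(v)\cap I}=1\}$, where $u_v$ is the unique neighbour of $v$ in $I$. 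A short case analysis on $d$ then gives the optimum in closed form. If $d\ge 2$ the answer is $0$. If $d=0$, deleting any vertex of $I$ drops $\alpha$, so $S\subseteq K$ is forced, and uniqueness of the maximum independent set of $G-S$ amounts to $\abs{Z\setminus S}=1$; the answer is $\abs{Z}-1$. If $d=1$, at most one vertex of $I$ can be deleted without dropping $\alpha$: deleting none forces $W\subseteq S$ (cost $\abs{W}$), while deleting a single $u$ with $W_u:=\{v\in W:u_v=u\}\ne\emptyset$ forces $S\supseteq\{u\}\cup(W_u\setminus\{v_0\})$ for one $v_0\in W_u$ (cost $\abs{W_u}$), and one checks these are the only options, so the answer is $\min_{v\in W}\abs{W_{u_v}}$. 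All these quantities are computed in linear time from the degrees of the vertices of $K$ into $I$; disconnected split graphs cause no trouble, since at most one component contains an edge and isolated vertices lie in every maximum independent set.

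For unit interval graphs I would use a linear-time-computable ordering $v_1<\dots<v_n$ of the vertices of each component in which the closed neighbourhood of every vertex is an interval of consecutive vertices and both endpoints of these intervals are non-decreasing along the ordering. In such an ordering the greedy ``leftmost'' maximum independent set $A_L$ and ``rightmost'' maximum independent set $A_R$ are well defined, every maximum independent set lies coordinate-wise between them, and the graph has a unique maximum independent set precisely when $A_L=A_R$. I would then run a dynamic program that processes $v_1,\dots,v_n$ from left to right and, for each vertex, chooses its role --- deleted (put into $S$), selected into the intended unique maximum independent set $A$, or kept outside $A$ --- while maintaining a bounded state: the last vertex put into $A$, the number of vertices put into $A$ so far, and a flag recording whether the partial construction of a maximum independent set of $G-S$ on the processed vertices has already admitted a genuine branch. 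The interval structure of the neighbourhoods makes every transition depend only on this bounded state, so the dynamic program runs in linear time; at the end one keeps the states in which $A$ has size $\alpha(G)$ (precomputed) and the flag certifies uniqueness, and outputs the least number of deletions among them. Summing over components handles disconnected graphs.

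The main obstacle, and the technical heart of the unit interval case, is the uniqueness requirement: unlike $\alpha(G-S)=\alpha(G)$, which is a plain counting condition, ``$G-S$ has a unique maximum independent set'' is a global property, and one must show it admits a \emph{local certificate} along the ordering --- namely, that if $G-S$ has two distinct maximum independent sets, then the first position at which they differ, together with a constant number of surrounding vertices, exhibits a branching that the flag detects, and conversely that a clean run of the dynamic program really does produce a graph whose unique maximum independent set has the correct size. Establishing this equivalence and ruling out overcounting in the three-way branching is where the care lies; the split-graph algorithm is then essentially the degenerate instance of the same idea in which the ordering collapses to a clique followed by an independent set.
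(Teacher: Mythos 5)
Your reduction of \textsc{PAU-VC} to ``delete a minimum $S$ with $\alpha(G-S)=\alpha(G)$ so that $G-S$ has a unique maximum independent set'' is correct, and your split-graph analysis is sound and complete: it is essentially the paper's argument read in the complement (the paper's preprocessing of clique vertices with two or more independent-set neighbours corresponds to your observation that such vertices lie in no maximum independent set, and your three cases on $d$ match its cases on $A_0$ and on $b^*$ minimizing $|N_G(b^*)|$).

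The unit interval case, however, has a genuine gap, and you have named it yourself: the claim that uniqueness of the maximum independent set of $G-S$ can be certified by a left-to-right scan whose state is only (last vertex placed in $A$, count, one boolean ``a branch has occurred'' flag) is precisely the statement that needs proof, and as formulated it is doubtful. Whether a local deviation from the intended set $A$ extends to a \emph{maximum} independent set of $G-S$ depends on the future of the scan, so a flag set at the moment a ``branch is admitted'' will record deviations that cannot be completed to size $\alpha(G)$ (false positives) unless you prove an exchange lemma saying every such deviation can be completed; you give no such lemma. The paper closes exactly this hole with extra structure: it greedily extracts a maximum independent set $\tilde I_1,\dots,\tilde I_m$, partitions the intervals into cliques $\mathcal I_1,\dots,\mathcal I_m$ meeting only consecutively, and proves that \emph{every} minimum vertex cover omits exactly one vertex per clique. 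Alternative solutions are then completely described by which vertex of each clique is omitted, the set of vertices forced into $S$ by a choice of omitted vertices in two consecutive cliques is an explicit set $A_{i,a,b}$, and the DP indexes its states by the omitted vertex of the current clique rather than by a uniqueness flag. Without this (or an equivalent) structural characterization of all maximum independent sets of a unit interval graph, your dynamic program's correctness is unproven, so the unit interval half of the theorem is not established by your argument.
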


Note that the class of split graphs and the class of unit interval graphs are well-known subclasses of the class of \emph{chordal graphs}. It would be interesting to determine whether \textsc{Min PAU-VC} can be solved in polynomial time on chordal graphs. 

   This paper is organized as follows. 
    In Section~\ref{sec:prelim}, we introduce basic definitions and notations, including clique-width and NLC-width.
    In Section~\ref{sec:cliquewidth}, we present a fixed-parameter tractable algorithm for \textsc{Min PAU-VC} parameterized by clique-width. We present  linear time algorithms for \textsc{Min PAU-VC} on unit interval graphs in Section~\ref{sec:unitinterval} and on split graphs in Section~\ref{sec:split}. We discuss some open problems in Section~\ref{sec:conclusion}.

\section{Preliminary}\label{sec:prelim}

For every positive integer $n$, let $[n]$ denote the set of positive integers at most $n$.
All graphs in this paper are simple and finite. 
For a graph $G$, we denote by $V(G)$ and $E(G)$ the vertex set and edge set of $G$, respectively.
For graphs $G$ and $H$, let $G\cup H$ be the graph with vertex set $V(G)\cup V(H)$ and edge set $E(G)\cup E(H)$. 

Let $G$ be a graph. For a vertex $v$ of a graph $G$, let $N_G(v)$ denote the set of \emph{neighbors} of $v$ in $G$.
For $X\subseteq V(G)$, let $G[X]$ denote the subgraph of $G$ induced by $X$.
We denote by $G - X$ the graph $G[V(G)\setminus X]$, and for a single vertex $x \in V(G)$, we use the shorthand `$G - x$' for `$G - \{x\}$'. 
For two sets $X,Y\subseteq V(G)$, let $G[X,Y]$ be the graph $(X \cup Y, \{xy \in E(G) : x \in X, y \in Y\})$.

A set $X \subseteq V(G)$ is a \emph{clique} if any two vertices of $X$ are adjacent in $G$,
and it is an \emph{independent set} if any two vertices of $X$ are not adjacent in $G$.

\subsection{Clique-Width and NLC-Width}
Let $k$ be a positive integer. 
A \emph{$k$-labeled graph} is a pair $(G, \lab_G)$ of a graph $G$ and a function $\lab_G:V(G)\to [k]$, called the \emph{labeling function}. We denote by $\lab_G^{-1}(i)$ the set of vertices in $G$ with label $i$.

We first define the \emph{clique-width} of graphs. 
For a $k$-labeled graph $(G, \lab_G)$ and $i,j\in [k]$ with $i\neq j$, let $\eta_{i,j}(G, \lab_G)$ be the $k$-labeled graph obtained from $(G, \lab_G)$ by adding an edge between every vertex of label $i$ and every vertex of label $j$, and let $\rho_{i\to j}(G, \lab_G)$ be the $k$-labeled graph obtained from $(G, \lab_G)$ by relabeling every vertex of $i$ to $j$. For two vertex-disjoint $k$-labeled graphs $(G, \lab_G)$ and $(H, \lab_H)$, let $(G, \lab_G)\oplus (H, \lab_H)$ be the disjoint union of them.

The class $\CW_k$ of $k$-labeled graphs is recursively defined as follows.
\begin{itemize}
    \item The single vertex graph $i(x)$, with a vertex $x$ labeled with $i\in [k]$, is in $\CW_k$.
    \item Let $(G, \lab_G)$ and $(H, \lab_H)$ be two vertex-disjoint $k$-labeled graphs in $\CW_k$.
    Then $(G, \lab_G)\oplus (H, \lab_H)\in \CW_k$.
    \item Let $(G, \lab_G)\in \CW_k$ and $i,j\in [k]$ with $i\neq j$. Then 
    $\eta_{i,j}(G, \lab_G)\in \CW_k$.
    \item  Let $(G, \lab_G)\in \CW_k$ and $i,j\in [k]$. Then 
     $\rho_{i\to j}(G, \lab_G)\in \CW_k$.
    \end{itemize}
A \emph{clique-width $k$-expression} is a finite term built with the four operations above and using at most $k$ labels. The \emph{clique-width} of a graph is the minimum $k$ such that $(G, \lab_G)\in \CW_k$ for some labeling $\lab_G$. 

For example, 
\[ \eta_{2,3}\Big( \eta_{1,2}\big(1(a)\oplus 2(b)\big) \oplus \eta_{1,3}\big(3(c)\oplus 1(d)\big) \Big) \]
is a clique-width $3$-expression of a path $P_4$ on $4$ vertices. See Figure~\ref{fig:clique_width}. Thus, $P_4$ has clique-width at most $3$.
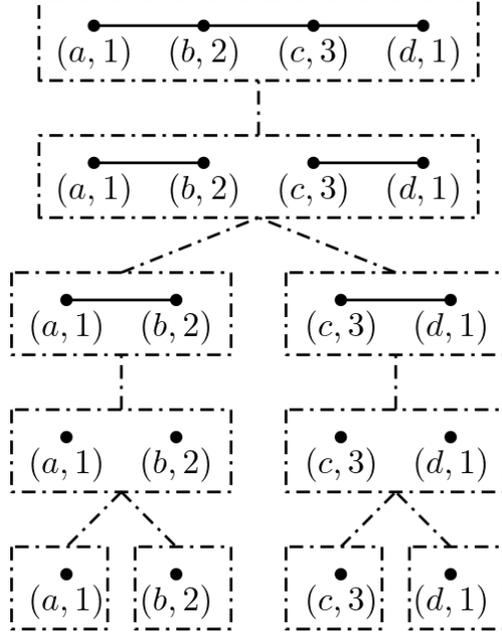
\begin{figure}[t]
\centering
\resizebox{0.85\linewidth}{!}{
\begin{tikzpicture}[
    gbox/.style={draw, rectangle, inner sep=2pt, rounded corners=2pt, fill=white},
    gnode/.style={circle, fill=black, inner sep=1.5pt},
    thick
]
    \node[gbox, label=left:{$\eta_{2,3}$}] (root) at (0, 0) {
        \begin{tikzpicture}[x=0.5cm, y=0.5cm]
            \node[gnode, label=below:{\scriptsize $1$}, label=above:{\scriptsize $a$}] (a) at (0,0) {};
            \node[gnode, label=below:{\scriptsize $2$}, label=above:{\scriptsize $b$}] (b) at (1,0) {};
            \node[gnode, label=below:{\scriptsize $3$}, label=above:{\scriptsize $c$}] (c) at (2,0) {};
            \node[gnode, label=below:{\scriptsize $1$}, label=above:{\scriptsize $d$}] (d) at (3,0) {};
            \draw (a) -- (b) -- (c) -- (d);
        \end{tikzpicture}
    };

    \node[gbox, label=left:{$\oplus$}] (oplus_mid) at (0, -1.25) {
        \begin{tikzpicture}[x=0.5cm, y=0.5cm]
            \node[gnode, label=below:{\scriptsize $1$}, label=above:{\scriptsize $a$}] (a) at (0,0) {};
            \node[gnode, label=below:{\scriptsize $2$}, label=above:{\scriptsize $b$}] (b) at (1,0) {};
            \node[gnode, label=below:{\scriptsize $3$}, label=above:{\scriptsize $c$}] (c) at (2,0) {};
            \node[gnode, label=below:{\scriptsize $1$}, label=above:{\scriptsize $d$}] (d) at (3,0) {};
            \draw (a) -- (b); 
            \draw (c) -- (d);
        \end{tikzpicture}
    };

    \node[gbox, label=left:{$\eta_{1,2}$}] (eta_L) at (-3.5, -2.5) {
        \begin{tikzpicture}[x=0.5cm, y=0.5cm]
            \node[gnode, label=below:{\scriptsize $1$}, label=above:{\scriptsize $a$}] (a) at (0,0) {};
            \node[gnode, label=below:{\scriptsize $2$}, label=above:{\scriptsize $b$}] (b) at (1,0) {};
            \draw (a) -- (b);
        \end{tikzpicture}
    };
    
    \node[gbox, label=right:{$\eta_{1,3}$}] (eta_R) at (3.5, -2.5) {
        \begin{tikzpicture}[x=0.5cm, y=0.5cm]
            \node[gnode, label=below:{\scriptsize $3$}, label=above:{\scriptsize $c$}] (c) at (0,0) {};
            \node[gnode, label=below:{\scriptsize $1$}, label=above:{\scriptsize $d$}] (d) at (1,0) {};
            \draw (c) -- (d);
        \end{tikzpicture}
    };

    \node[gbox, label=left:{$\oplus$}] (oplus_L) at (-3.5, -3.75) {
        \begin{tikzpicture}[x=0.5cm, y=0.5cm]
            \node[gnode, label=below:{\scriptsize $1$}, label=above:{\scriptsize $a$}] (a) at (0,0) {};
            \node[gnode, label=below:{\scriptsize $2$}, label=above:{\scriptsize $b$}] (b) at (1,0) {};
        \end{tikzpicture}
    };
    
    \node[gbox, label=right:{$\oplus$}] (oplus_R) at (3.5, -3.75) {
        \begin{tikzpicture}[x=0.5cm, y=0.5cm]
            \node[gnode, label=below:{\scriptsize $3$}, label=above:{\scriptsize $c$}] (c) at (0,0) {};
            \node[gnode, label=below:{\scriptsize $1$}, label=above:{\scriptsize $d$}] (d) at (1,0) {};
        \end{tikzpicture}
    };

    \node[gbox, label=left:{$1(a)$}] (leaf_a) at (-5, -5) {
        \begin{tikzpicture}[x=0.5cm, y=0.5cm] \node[gnode, label=below:{\scriptsize $1$}, label=above:{\scriptsize $a$}] (a) at (0,0) {}; \end{tikzpicture}
    };
    \node[gbox, label=right:{$2(b)$}] (leaf_b) at (-2, -5) {
        \begin{tikzpicture}[x=0.5cm, y=0.5cm] \node[gnode, label=below:{\scriptsize $2$}, label=above:{\scriptsize $b$}] (b) at (0,0) {}; \end{tikzpicture}
    };
    \node[gbox, label=left:{$3(c)$}] (leaf_c) at (2, -5) {
        \begin{tikzpicture}[x=0.5cm, y=0.5cm] \node[gnode, label=below:{\scriptsize $3$}, label=above:{\scriptsize $c$}] (c) at (0,0) {}; \end{tikzpicture}
    };
    \node[gbox, label=right:{$1(d)$}] (leaf_d) at (5, -5) {
        \begin{tikzpicture}[x=0.5cm, y=0.5cm] \node[gnode, label=below:{\scriptsize $1$}, label=above:{\scriptsize $d$}] (d) at (0,0) {}; \end{tikzpicture}
    };

    \draw (root) -- (oplus_mid);
    \draw (oplus_mid) -- (eta_L);
    \draw (oplus_mid) -- (eta_R);
    \draw (eta_L) -- (oplus_L);
    \draw (eta_R) -- (oplus_R);
    \draw (oplus_L) -- (leaf_a);
    \draw (oplus_L) -- (leaf_b);
    \draw (oplus_R) -- (leaf_c);
    \draw (oplus_R) -- (leaf_d);

\end{tikzpicture}
} 
\caption{An illustration of a clique-width $3$-expression of $P_4$.}
\label{fig:clique_width}
\end{figure}
   
Now, we define the \emph{NLC-width} of graphs introduced by Wanke~\cite{Wanke1994}.
For two vertex-disjoint $k$-labeled graphs $(G, \lab_G)$ and $(H, \lab_H)$ and a set $M\subseteq [k]^2$ of label pairs, we define 
$(G, \lab_G)\times_M (H, \lab_H)\coloneqq ((V',E'),\lab')$  where 
\begin{itemize}
    \item $V'=V(G)\cup V(H)$,
    \item $E'=E(G)\cup E(H)\cup \{uv:u\in V(G), v\in V(H), (\lab_G(u), \lab_H(v))\in M\}$,
    \item $\lab'(u)=\lab_G(u)$ if $u\in V(G)$ and $\lab'(u)=\lab_H(u)$ otherwise.
\end{itemize}
In other words, $(G, \lab_G)\times_M (H, \lab_H)$ is obtained from the disjoint union of $(G, \lab_G)$ and $(H, \lab_H)$ by, for every $(i,j)\in M$, adding all edges between vertices of label $i$ in $G$ and vertices of label $j$ in $H$. For a $k$-labeled graph $(G, \lab_G)$ and a function $R:[k]\to [k]$, let  $\rho_R(G, \lab_G)=(G, \lab')$ where $\lab'(u)=R(\lab_G(u))$ for all $u\in V(G)$.

 The class $\NLC_k$ of $k$-labeled graphs is recursively defined as follows.
\begin{enumerate}
    \item The single vertex graph $i(x)$, with a vertex $x$ labeled with $i\in [k]$, is in $\NLC_k$.
    \item Let $(G, \lab_G)\in \NLC_k$, and let $R:[k]\to [k]$ be a function. Then 
    $\rho_R(G, \lab)\in \NLC_k$.
    \item Let $(G, \lab_G)$ and $(H, \lab_H)$ be two vertex-disjoint labeled graphs in $\NLC_k$, and $M\subseteq [k]^2$.
    Then $(G, \lab_G)\times_M (H, \lab_H)\in \NLC_k$.
\end{enumerate}
An \emph{NLC-width $k$-expression} is a finite term built with the three operations above and using at most $k$ labels. The \emph{NLC-width} of a graph $G$ is the minimum $k$ such that $(G, \lab_G)\in \NLC_k$ for some labeling $\lab_G$.

\begin{theorem}[Johansson~\cite{Johansson1998}]\label{thm:transform}
Let $k$ be a positive integer.
Every graph of clique-width at most $k$ has NLC-width at most $k$, and one can in polynomial time transform a clique-width $k$-expression to an NLC-width $k$-expression.
\end{theorem}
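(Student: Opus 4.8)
The plan is to prove the inequality $\nlcw(G)\le \cw(G)$ constructively, by translating a given clique-width $k$-expression $t$ into an NLC-width $k$-expression producing the same labelled graph, using structural induction on $t$. Three of the four clique-width operations translate immediately: the single-vertex term $i(x)$ is literally an NLC leaf; the relabelling $\rho_{i\to j}$ is the special case $\rho_R$ of the NLC relabelling with $R$ fixing every label except $i\mapsto j$; and the disjoint union $\oplus$ is exactly $\times_{\emptyset}$, since taking $M=\emptyset$ in the NLC product adds no edges. The only genuine obstacle is the edge-creation $\eta_{i,j}$: it adds edges between the label-$i$ and label-$j$ vertices of a single graph, whereas $\times_M$ can only create edges that cross the cut between its two operands. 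Thus one cannot simulate $\eta_{i,j}$ in isolation, and the heart of the proof is to show that every $\eta_{i,j}$ can be absorbed into the nearest union below it.

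First I would formalise this absorption by carrying a set $S$ of unordered label pairs, representing ``deferred joins'', down the expression tree, and defining a translation $\tau(t,S)$ whose output is an NLC-width expression for the graph obtained from $t$ by additionally joining, for every $\{i,j\}\in S$, all label-$i$ vertices to all label-$j$ vertices. The recursion is driven by the distributive law that $\eta_{i,j}(G_1\oplus G_2)$ equals the graph in which $\eta_{i,j}$ is applied inside $G_1$, inside $G_2$, and across the cut. Concretely I would set $\tau(\eta_{i,j}(t),S)=\tau(t,S\cup\{\{i,j\}\})$, so that edge-creations simply enlarge $S$; set $\tau(t_1\oplus t_2,S)=\tau(t_1,S)\times_M \tau(t_2,S)$ where $M=\{(i,j),(j,i):\{i,j\}\in S\}$ records exactly the cross-edges; and set $\tau(i(x),S)=i(x)$, since a single vertex has no internal edges and all deferred joins vanish. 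Evaluating $\tau$ at the root with $S=\emptyset$ then yields the desired NLC-width expression, and its correctness is an induction showing that the labelled graph produced by $\tau(t,S)$ equals the one produced by $t$ with the $S$-joins added.

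The technical core, and the step I expect to be most delicate, is the relabelling case $\tau(\rho_{a\to b}(t),S)=\rho_R\big(\tau(t,S')\big)$, where the deferred joins must be ``pulled back'' through the relabelling: since after $\rho_{a\to b}$ the label-$b$ class is the union of the old label-$a$ and label-$b$ classes while the label-$a$ class becomes empty, each pair $\{i,j\}\in S$ must be replaced by all pairs $\{p,q\}$ with $p$ ranging over the preimage of $i$ and $q$ over the preimage of $j$, discarding pairs that hit the now-empty label $a$ or that collapse to a single label. Verifying that this $S'$ reproduces exactly the same edge set requires a case analysis on whether $i$ or $j$ equals $a$ or $b$, and is the one place where a single pair can split into several. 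Finally I would check the two quantitative claims: no label outside $[k]$ is ever introduced, so the NLC-width stays at most $k$; and since each node of $t$ is visited once along its unique root path, with $|S|$ bounded by the $O(k^2)$ possible label pairs, the resulting expression has size linear in $|t|$ and is computed in polynomial time, giving the stated transformation.
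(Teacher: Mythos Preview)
The paper does not prove this theorem: it is stated as a black-box citation of Johansson's result and used without argument. So there is no ``paper's own proof'' to compare against.

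That said, your proposal is a correct and standard reconstruction of the argument. Carrying a set $S$ of deferred joins down the expression tree, absorbing each $\eta_{i,j}$ into $S$, discharging $S$ across every $\oplus$ by taking $M=\{(i,j),(j,i):\{i,j\}\in S\}$, and pulling $S$ back through each $\rho_{a\to b}$ via preimages is exactly how the translation is usually done. Your invariant (that $\tau(t,S)$ produces the labelled graph of $t$ with all $S$-joins added) is the right one, and the three recursive clauses together with the leaf case do maintain it.

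Two minor remarks. First, in the relabelling case the worry about pairs ``collapsing to a single label'' is vacuous: since $R$ is a function, $R^{-1}(i)$ and $R^{-1}(j)$ are disjoint whenever $i\neq j$, so every pullback pair $\{p,q\}$ automatically has $p\neq q$. Second, your running-time claim is fine but slightly understated: each node of $t$ is visited exactly once by the recursion (not ``once along its unique root path''), with $O(k^2)$ work per node to form $M$ or $S'$, so the translation runs in time $O(|t|\cdot k^2)$ and the output expression has at most as many nodes as $t$ (indeed fewer, since all $\eta$-nodes vanish). None of this affects correctness.
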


We remark about algorithms to find a clique-width expression when it is not given. 
Fellows, Rosamond, Rotics, and Szeider~\cite{FellowsRRS2006} proved that computing clique-width is NP-hard.
Oum and Seymour~\cite{OumS2006} first obtained an approximation algorithm that computes a clique-width $(2^{3k+2}-1)$-expression of a given graph $G$ of clique-width at most $k$, which runs in time $\mathcal{O}(8^k n^{9} \log n)$.
Oum~\cite{Oum2009} later improved this by providing two algorithms; one is an algorithm that computes a clique-width $(8^{k}-1)$-expression in time $\mathcal{O}(g(k)\cdot n^{3})$ for some function $g$, and the other one is an algorithm that computes a clique-width $(2^{3k+2}-1)$-expression in time $\mathcal{O}(8^k n^4)$.
More recently, Fomin and Korhonen~\cite{FominK2024} devised an algorithm that computes a $(2^{2k+1}-1)$-expression in time $f(k)\cdot n^2$ for some function $f$, thereby breaking the cubic barrier of previous algorithms. Subsequently,  
Korhonen and Soko\l owski~\cite{KorhonenS2024} improved the running time to almost linear time. 
We may use one of these algorithms to produce a clique-width expression, when it is not given as input.

\section{Graphs of bounded clique-width}\label{sec:cliquewidth}
In this section, we prove Theorem~\ref{thm:maincliquewidth}.
Before diving into our main theorems, we present an idea for having a simpler polynomial time algorithm for \textsc{Min PAU-VC} on trees in Subsection~\ref{sec:trees}.
We present a fixed-parameter tractable algorithm for \textsc{Min PAU-VC} parameterized by clique-width in Subsection~\ref{sec:fptcliquewidth}.

\subsection{Trees}\label{sec:trees}
Let $G$ be a tree. We choose an arbitrary vertex as the root of $G$, call it $r$.
For each node $v\in V(G)$, we use $G_v$ to denote the subtree of $G$ rooted at $v$. 

For each vertex $v$, there are two types of vertex covers of $G_v$; one is a vertex cover of $G_v$ containing $v$ and the other is a vertex cover of $G_v$ not containing $v$. We want to find a set $S$ which forces the number of minimum vertex covers of each type to satisfy a certain condition. This naturally suggests the following definition. For a function $\beta:\{0,1\}\to \{0,1,2\}$, a set $S\subseteq V(G_v)$  is a \emph{$\beta$-set} in $G_v$ if the following hold:
\begin{itemize}
    \item If $\beta(0)\in \{0,1\}$, then there are exactly $\beta(0)$ minimum vertex covers of $G_v$ not containing $v$ and containing $S$. 
    \item If $\beta(0)=2$, then there are at least two minimum vertex covers of $G_v$ not containing $v$ and containing $S$.
    \item If $\beta(1)\in \{0,1\}$, then there are exactly $\beta(1)$ minimum vertex covers of $G_v$ containing $v$ and containing $S$. 
    \item If $\beta(1)=2$, then there are at least two minimum vertex covers of $G_v$ containing $v$ and containing $S$.
\end{itemize}
We will recursively compute a minimum $\beta$-set in $G_v$ for every possible function $\beta$ and every vertex $v\in V(G)$, if one exists.

It is not difficult to observe that if we have a minimum $\beta$-set of $G_r=G$ for every possible function $\beta$, then we can find an optimal solution of \textsc{Min PAU-VC}. That would be a minimum set among minimum $\beta$-sets of $G_r$ for which $\beta(0)+\beta(1)=1$.

Therefore, it suffices to recursively compute a minimum $\beta$-set of $G_v$ for every vertex $v\in V(G)$. The idea is straightforward. We need to propagate the information to children of $v$. Assume $\beta:\{0,1\}\to \{0,1,2\}$ is a given function. For example, if $\beta(0)=1$, then the $\beta$-set in $G_v$ should force a unique minimum vertex cover of $G_v$ not containing $v$. Then for each child $w$ of $v$, we have to determine a set forcing a unique minimum vertex cover of $G_w$ that contains $w$. This suggests how to split $\beta$ into functions $\beta_w$ for each child $w$, and we can find the corresponding $\beta$-set by taking the union of $\beta_w$-sets for children $w$ of $v$. 

This idea is generalized into graphs of bounded clique-width in the next subsection.  We will provide the dynamic programming algorithm and prove the correctness.

\subsection{FPT algorithm parameterized by clique-width}\label{sec:fptcliquewidth}
Let $(H, \lab_H)$ be a $k$-labeled graph. For a set $X$ of vertices in $H$, we denote by $\full_H(X)$ the set of integers $i\in [k]$ where $\lab_H^{-1}(i)\subseteq X$.
For $I\subseteq [k]$, a set $T\subseteq V(H)$ is a \emph{minimum vertex cover of $H$ with respect to $I$} if it is a minimum set among all vertex covers $X$ of $H$ with $\full_H(X)=I$. Note that $T$ is not necessarily a minimum vertex cover of $H$. Let $\mu_H(I)$ be the size of a minimum vertex cover of $H$ with respect to $I$. If there is no such set, then we define it to be~$\infty$.

Assume that $(F, \lab_F)=(G, \lab_G)\times_M (H, \lab_H)$ for some $k$-labeled graphs $(G, \lab_G)$, $(H, \lab_H)$, and $M\subseteq [k]^2$. Observe that for every $(i,j)\in M$, every vertex cover of $F$ either contains all vertices of $\lab_G^{-1}(i)$ or contains all vertices of $\lab_H^{-1}(j)$.  Thus, in each side, it is necessary to consider vertex covers that fully contain vertex sets of certain labels. This is the reason why we define minimum vertex covers with respect to $I\subseteq [k]$.

Now, to find sets $S\subseteq V(F)$ that force to have a unique minimum vertex cover, in each of $G$ and $H$, we need to know whether a set forces to have a unique minimum vertex cover with respect to some $I\subseteq [k]$. For each $I\subseteq [k]$, we need to distinguish three statuses: (1) $S$ does not force any minimum vertex cover with respect to $I$, (2) $S$ forces a unique minimum vertex cover with respect to $I$, or (3) $S$ forces at least two minimum vertex covers with respect to $I$. This property will be captured by the notion of characteristic, defined below.

A function $\beta:2^{[k]}\to \{0,1,2\}$ is the \emph{characteristic} of a set $S\subseteq V(H)$ in $H$ if for every $J\subseteq [k]$, 
\begin{itemize}
    \item  if $\beta(J)\in \{0,1\}$, then there are exactly $\beta(J)$ minimum vertex covers of $H$ with respect to $J$ and containing $S$, and 
    \item if $\beta(J)=2$, then there are at least two minimum vertex covers of $H$ with respect to $J$ and containing $S$.
\end{itemize}
Such a set $S\subseteq V(H)$ is called a \emph{$\beta$-set in $H$}. Let $\Pi(H)$ be the collection of functions $\beta:2^{[k]}\to \{0,1,2\}$ such that there is a $\beta$-set in $H$.

In the following lemma, we explain how we can solve \textsc{Min PAU-VC} on a $k$-labeled graph $H$ if we know the set $\Pi(H)$ and the function $\mu_H$ and a collection of minimum $\beta$-sets for $\beta\in \Pi(H)$. 
\begin{lemma}\label{lem:inclusionfinalstep}
    Let $k$ be a positive integer. Given a $k$-labeled graph $(G, \lab_G)$ with $\Pi(G)$, $\mu_G$ and a collection of minimum $\beta$-sets for $\beta\in \Pi(G)$, one can solve \textsc{Min PAU-VC} for $G$ in time $2^{\mathcal{O}(2^k)}|V(G)|$. 
\end{lemma}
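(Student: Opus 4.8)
The plan is to observe that a minimum vertex cover $C$ of $G$ is a minimum vertex cover of $G$ with respect to $I := \full_G(C)$, and that among all $I \subseteq [k]$, the value $\mu_G(I)$ is minimized exactly by those $I$ arising as $\full_G(C)$ for some minimum vertex cover $C$ of $G$. So first I would compute $\nu := \min_{I \subseteq [k]} \mu_G(I)$, the size of a minimum vertex cover of $G$, by scanning all $2^k$ values of $\mu_G$; this takes $\mathcal{O}(2^k)$ time given $\mu_G$. A set $S$ is then a feasible solution to \textsc{PAU-VC} on $G$ precisely when the total number of minimum vertex covers of $G$ containing $S$ is exactly one — equivalently, when $\sum_{I : \mu_G(I) = \nu} (\text{number of min.\ vertex covers of } G \text{ w.r.t.\ } I \text{ containing } S) = 1$.

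Next I would translate this into the language of characteristics. For $\beta \in \Pi(G)$, a $\beta$-set $S$ satisfies: for every $J \subseteq [k]$, the number of minimum vertex covers of $G$ with respect to $J$ containing $S$ is $\beta(J)$ if $\beta(J) \in \{0,1\}$, and is at least $2$ if $\beta(J) = 2$. Hence a $\beta$-set $S$ is a feasible \textsc{PAU-VC} solution if and only if, writing $\mathcal{I}^* := \{I \subseteq [k] : \mu_G(I) = \nu\}$, we have $\beta(J) = 0$ for every $J \in \mathcal{I}^*$ except for exactly one $J_0 \in \mathcal{I}^*$ with $\beta(J_0) = 1$ (and $\beta$ may be arbitrary on $J \notin \mathcal{I}^*$, since those vertex covers are not minimum and do not affect uniqueness among minimum vertex covers). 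Call such a $\beta$ \emph{good}. Then the optimal \textsc{PAU-VC} value for $G$ is $\min \{ |S| : S \text{ is a minimum } \beta\text{-set in } G \text{ for some good } \beta \in \Pi(G) \}$, and the corresponding minimum $\beta$-set realizes it.

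The algorithm is therefore: compute $\nu$ and $\mathcal{I}^*$ from $\mu_G$; enumerate all $\beta \in \Pi(G)$ (there are at most $3^{2^k}$ of them, since $\Pi(G) \subseteq \{0,1,2\}^{2^{[k]}}$); for each, check in time $\mathcal{O}(2^k)$ whether it is good; among the good ones, return a minimum-size $\beta$-set, using the supplied collection of minimum $\beta$-sets. Checking goodness and comparing cardinalities over all $\beta$ costs $\mathcal{O}(3^{2^k} \cdot 2^k)$ time, and reading off and comparing the $\mathcal{O}(3^{2^k})$ candidate sets (each of size at most $|V(G)|$) costs $\mathcal{O}(3^{2^k} |V(G)|)$; the latter dominates, giving the claimed bound. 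If no good $\beta$ lies in $\Pi(G)$ — which cannot actually happen, since $G$ always has at least one minimum vertex cover $C$ and $S = C$ witnesses a good characteristic — we would report infeasibility, but this case is vacuous.

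The main thing to get right is not the running time, which is a straightforward count, but the \emph{correctness} of the reduction to good characteristics: one must argue carefully that restricting attention to $J \in \mathcal{I}^*$ is sound, i.e.\ that a set $S$ forces a unique \emph{minimum} vertex cover of $G$ if and only if, summed over the labelings $I$ achieving the minimum cover size, exactly one minimum vertex cover with respect to some such $I$ contains $S$. This uses the fact that every vertex cover $C$ of $G$ has a well-defined $\full_G(C) \in 2^{[k]}$ and that $C$ is counted among the minimum vertex covers with respect to $\full_G(C)$ iff $|C| = \mu_G(\full_G(C))$, together with the observation that a minimum vertex cover of $G$ with respect to $I$ of size $\nu$ is automatically a minimum vertex cover of $G$. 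I expect this bookkeeping — in particular making sure no minimum vertex cover is double-counted or missed across different values of $I$ — to be the only delicate point.
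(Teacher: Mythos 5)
Your proposal is correct and follows essentially the same route as the paper: your ``good'' characteristics are exactly the paper's ``valid'' ones (the condition $\sum_{J\in\Gamma}\beta(J)=1$ with $\Gamma=\{J:\mu_G(J)=\min_I\mu_G(I)\}$ is equivalent to your requirement that $\beta$ be $0$ on all minimizing $J$ except a single $J_0$ with $\beta(J_0)=1$). The only difference is that you spell out the bookkeeping (partitioning minimum vertex covers by $\full_G(\cdot)$) that the paper leaves implicit, which is a welcome addition rather than a deviation.
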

\begin{proof}
Let $\mu=\min_{I\subseteq [k]} \mu_G(I)$, and $\Gamma=\{J\subseteq [k]: \mu_G(J)=\mu\}$.
Then $\mu$ is the size of a minimum vertex cover of $G$. 
We say that a function $\beta:2^{[k]}\to \{0,1,2\}$ is \emph{valid} if $\sum_{J\in \Gamma}\beta(J)=1$. A $\beta$-set with a valid function $\beta$ in $\Pi(G)$ is a set forcing a unique minimum vertex cover in $G$. Thus, the minimum $\beta$-set with a valid function $\beta$ in $\Pi(G)$ is a required solution for \textsc{Min PAU-VC}.
\end{proof}

By Lemma~\ref{lem:inclusionfinalstep}, it is sufficient to compute $\Pi(H)$ and $\mu_H$ and a collection of minimum $\beta$-sets. We will compute them in a bottom-up way, along a given NLC-width $k$-expression. 

In the next lemma, we describe how to merge information for $(G, \lab_G)$ and $(H, \lab_H)$ to obtain information for $(G, \lab_G)\times_M (H, \lab_H)$.

\begin{lemma}\label{lem:product}
    Let $k$ be a positive integer, and let $(G, \lab_G)$ and $(H, \lab_H)$ be vertex-disjoint $k$-labeled non-empty graphs. Let $M\subseteq [k]^2$ and let $(F, \lab_F)=(G, \lab_G)\times_M (H, \lab_H)$. 
    
    Given $\Pi(G), \Pi(H)$ and $\mu_G, \mu_H$ and a collection $\mathcal{I}_G$ of minimum $\beta$-sets for $\beta\in \Pi(G)$ and a collection $\mathcal{I}_H$ of minimum $\beta$-sets for $\beta\in \Pi(H)$, one can compute $\Pi(F), \mu_F$ and a collection $\mathcal{I}_F$ of minimum $\beta$-sets for $\beta\in \Pi(F)$ in time $2^{\mathcal{O}(2^k)}|V(F)|$.
\end{lemma}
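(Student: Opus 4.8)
The plan is to carry out a case analysis on how a vertex cover of $F$ splits across the bipartition $(V(G), V(H))$, conditioned on the label-saturation sets on each side. Fix a target set $I \subseteq [k]$ for which we want to understand minimum vertex covers of $F$ with respect to $I$. A vertex cover $X$ of $F$ restricts to $X_G = X \cap V(G)$ and $X_H = X \cap V(H)$; these are vertex covers of $G$ and $H$ respectively, and additionally, for every $(i,j) \in M$, the ``cross edges'' between $\lab_G^{-1}(i)$ and $\lab_H^{-1}(j)$ force either $\lab_G^{-1}(i) \subseteq X_G$ or $\lab_H^{-1}(j) \subseteq X_H$. So the first step is: for each pair $(I_G, I_H)$ with $I_G \subseteq [k]$, $I_H \subseteq [k]$, check the \emph{compatibility condition} that for every $(i,j) \in M$ we have $i \in I_G$ or $j \in I_H$, and that $I_G, I_H$ together ``produce'' saturation $I$ in $F$. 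Here one must be careful: $\full_F(X) = I$ is determined by $\full_G(X_G)$ and $\full_H(X_H)$ in a simple way — a label $\ell$ is saturated in $F$ iff it is saturated on both sides (whichever sides contain vertices of that label) — so $I$ is essentially $I_G \cap I_H$ restricted appropriately; I would write this down precisely as a boolean predicate $\mathrm{comp}(I_G, I_H, I)$ computable in time $O(1)$ for fixed $k$ (or $O(\mathrm{poly}(k))$).

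The second step is the arithmetic of counting. Given the decomposition, $\mu_F(I) = \min\{ \mu_G(I_G) + \mu_H(I_H) \}$ over all compatible triples $(I_G, I_H, I)$, and the minimum vertex covers of $F$ with respect to $I$ are exactly the disjoint unions $X_G \cup X_H$ where $X_G$ is a minimum vertex cover of $G$ w.r.t.\ $I_G$, $X_H$ a minimum vertex cover of $H$ w.r.t.\ $I_H$, and $\mu_G(I_G) + \mu_H(I_H) = \mu_F(I)$. Now I want to compute the characteristic $\beta_F$ of a candidate set $S = S_G \cup S_H$ where $S_G \subseteq V(G)$, $S_H \subseteq V(H)$. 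If $S_G$ is a $\beta_G$-set in $G$ and $S_H$ is a $\beta_H$-set in $H$, then the number of minimum vertex covers of $F$ w.r.t.\ $I$ containing $S$ is, by the product structure, $\sum \beta_G(I_G) \cdot \beta_H(I_H)$ over compatible minimizing triples, \emph{where the sum and products are taken in the truncated semiring} $\{0,1,2\}$ with $2$ acting as ``$\geq 2$'' (i.e.\ $1+1 = 2$, $2 + \text{anything} = 2$, $2 \cdot 1 = 2$, $2 \cdot 0 = 0$, etc.). So $\beta_F = \beta_G \boxtimes \beta_H$ for an explicitly defined combining operation $\boxtimes$ that depends only on $M$; this operation is well-defined on characteristics precisely because counting vertex covers factorizes over the disjoint union once the saturation profiles are fixed. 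I would state and prove a small claim that $(S_G \cup S_H)$ has characteristic $\beta_G \boxtimes \beta_H$ whenever $S_G, S_H$ have characteristics $\beta_G, \beta_H$.

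The third step assembles the algorithm. Iterate over all pairs $(\beta_G, \beta_H) \in \Pi(G) \times \Pi(H)$ — there are at most $3^{2^k} \cdot 3^{2^k} = 9^{2^k}$ of them — compute $\beta_F = \beta_G \boxtimes \beta_H$, and for the set $S = S_G \cup S_H$ (using the stored minimum representatives $S_G \in \mathcal{I}_G$, $S_H \in \mathcal{I}_H$) record $\abs{S_G} + \abs{S_H}$ as a candidate size for $\beta_F$. For each $\beta_F$ that arises, keep the minimum such $S$; this yields $\Pi(F)$ and $\mathcal{I}_F$. The value $\mu_F$ is computed directly from $\mu_G, \mu_H$ and the compatibility predicate in time $O(9^{2^k})$ (or less). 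The factor $|V(G)|$ — really $|V(F)|$ — in the running time comes from the cost of forming and storing the unions $S_G \cup S_H$. Since $\abs{\Pi(G)} \cdot \abs{\Pi(H)} \le 9^{2^k}$ and each combine plus union costs $O(3^{2^k} \cdot |V(F)|)$ (the $3^{2^k}$ from scanning the domain $2^{[k]}$ a constant number of times, times the linear cost of the union), the total is $O(27^{2^k} \cdot |V(F)|)$, matching the claimed bound.

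The main obstacle I anticipate is getting the combining operation $\boxtimes$ exactly right — in particular, proving that the truncated-semiring convolution over minimizing compatible triples genuinely computes (a correct truncation of) the number of minimum vertex covers of $F$ w.r.t.\ each $J \subseteq [k]$ containing $S_G \cup S_H$. Two subtleties need care: first, a single minimum vertex cover $X$ of $F$ w.r.t.\ $J$ may be expressible as $X_G \cup X_H$ with $\full_G(X_G) = I_G$, $\full_H(X_H) = I_H$ for \emph{more than one} compatible pair $(I_G, I_H)$ only if that cannot actually happen — I need to check that $X$ determines $I_G = \full_G(X_G)$ and $I_H = \full_H(X_H)$ uniquely, so there is no double-counting; this should follow because $X_G = X \cap V(G)$ is determined by $X$. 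Second, when $\beta_G(I_G) = 0$ for some needed $I_G$ — meaning $S_G$ is incompatible with saturation $I_G$ — the product correctly contributes $0$, but one must make sure the definition of characteristic (which only demands $\beta_G$ count covers \emph{containing} $S_G$) interacts correctly with ``minimum among those with $\full = I_G$'': the resolution is that $\mu_G(I_G)$ is defined irrespective of $S_G$, while $\beta_G(I_G)$ counts only the $S_G$-extensions, and the convolution must pin the size to $\mu_F(J)$ using $\mu_G, \mu_H$, not using sizes derived from $\beta_G$. I will phrase the combine as: $\beta_F(J) = \bigoplus_{(I_G,I_H)\,:\,\mathrm{comp}(I_G,I_H,J),\ \mu_G(I_G)+\mu_H(I_H) = \mu_F(J)} \beta_G(I_G) \odot \beta_H(I_H)$, with $\oplus, \odot$ the truncated operations, and verify this formula against the definition of characteristic line by line.
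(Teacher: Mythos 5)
Your proposal is correct and follows essentially the same route as the paper: decompose each minimum vertex cover of $F$ by the pair $(\full_G(X\cap V(G)),\full_H(X\cap V(H)))$, restrict to the pairs achieving $\mu_F$, and combine characteristics via a sum-of-products truncated at $2$ (the paper calls these ``proper splits'' and ``legitimate pairs''). The only cosmetic differences are that the paper enumerates splits via vertex covers of an auxiliary bipartite graph on the labels and iterates over target functions $\beta$ checking for a legitimate pair, whereas you iterate forward over pairs $(\beta_G,\beta_H)$ and compute $\beta_G\boxtimes\beta_H$ directly; both yield the same $\mathcal{O}(27^{2^k}\cdot|V(F)|)$ bound.
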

\begin{proof}
We construct an auxiliary bipartite graph $Q$ with bipartition $(\{a_i:i\in [k]\}, \{b_i:i\in [k]\})$ such that $a_i$ is adjacent to $b_j$ if and only if $(i,j)\in M$.
Let $A=\{a_i:i\in [k]\}$, $B=\{b_i:i\in [k]\}$, and let $g:V(Q)\to [k]$ be a function such that $g(a_i)=g(b_i)=i$ for all $i\in [k]$. Let $\mathcal{Z}$ be the collection of all vertex covers of $Q$. Note that the number of all vertex covers of $Q$ is at most $2^{2k}$.

We first compute $\mu_F(I)$ for each $I\subseteq [k]$, which is the size of a minimum vertex cover of $F$ with respect to $I$. Note that for any $(i,j)\in M$, every vertex cover of $F$ contains either all vertices of $\lab_G^{-1}(i)$ or all vertices of $\lab_H^{-1}(j)$. We guess a vertex cover of $Q$ corresponding to parts whose all vertices are contained in a vertex cover of $F$.

We construct a function $\mu^*$ as below. 
\begin{itemize}
    \item Let $I\subseteq [k]$. For each $Z\in \mathcal{Z}$ with $(g(Z\cap A)\cap g(Z\cap B))\setminus I=\emptyset$, let $I_G=I\cup g(Z\cap A)$ and $I_H=I\cup g(Z\cap B)$,
    and let $\alpha(Z)\coloneqq \mu_G(I_G)+\mu_H(I_H)$.
    \item We define $\mu^*(I)$ as the minimum such $\alpha(Z)$ over all $Z\in \mathcal{Z}$ with $(g(Z\cap A)\cap g(Z\cap B))\setminus I=\emptyset$. Note that such a set $Z$ exists as $A$ is such a vertex cover.
\end{itemize}

\begin{claim}\label{claim:correctnessmu}
    The above procedure correctly computes $\mu_F$, that is, $\mu^*(I)=\mu_F(I)$ for every $I\subseteq [k]$.
\end{claim}
\begin{clproof}
    Let $I\subseteq [k]$. 
    
    We first verify that $\mu^*(I)\ge \mu_F(I)$.
    By definition, there exists  
    $Z\in \mathcal{Z}$ with $(g(Z\cap A)\cap g(Z\cap B))\setminus I=\emptyset$ such that \[\mu^*(I)=\alpha(Z)=\mu_G(I\cup g(Z\cap A))+\mu_H(I\cup g(Z\cap B)).\] 
    
    Let $T_G$ be a minimum vertex cover of $G$ with respect to $I\cup g(Z \cap A)$ and $T_H$ be a minimum vertex cover of $H$ with respect to $I\cup g(Z\cap B)$. Since $Z$ is a vertex cover of $Q$, $F-(T_G\cup T_H)$ has no edge between $V(G)$ and $V(H)$. Also, $T_G$ and $T_H$ are vertex covers of $G$ and $H$, respectively. As $(g(Z\cap A)\cap g(Z\cap B))\setminus I=\emptyset$,  $T_G\cup T_H$ is a vertex cover of $F$ with respect to $I$. So, we have  $\mu^*(I)\ge \mu_F(I)$.

    To show that $\mu^*(I)\le \mu_F(I)$, let $T$ be a minimum vertex cover of $F$ with respect to $I$, which has size $\mu_F(I)$. Let $T_G=T\cap V(G)$ and $T_H=T\cap V(H)$. Since $\full_F(T)=I$, for every $i\in [k]\setminus I$, either $i\notin \full_G(T_G)$ or $i\notin \full_H(T_H)$.
    Also, since $T$ is a vertex cover of $F$, for every $(i,j)\in M$, either $i\in \full_G(T_G)$ or $j\in \full_H(T_H)$.
    This implies that the set $\{a_i:i\in \full_G(T_G)\}\cup \{b_i:i\in \full_H(T_H)\}$ is a vertex cover of $Q$. 
    Note that $T_G$ is a minimum vertex cover of $G$ with respect to $\full_G(T_G)$; if there is a smaller vertex cover of $G$  with respect to $\full_G(T_G)$, then we can find a smaller vertex cover of $F$. Similarly, $T_H$ is also a minimum vertex cover of $H$ with respect to $\full_H(T_H)$. Then $\mu_F(I)=\alpha(Z)$ where $Z=\{a_i:i\in \full_G(T_G)\}\cup \{b_i:i\in \full_H(T_H)\}$.  Thus, $\mu^*(I)\le \mu_F(I)$.
\end{clproof}

Note that the number of vertex covers of $Q$ is at most $2^{2k}$. For each $I\subseteq [k]$, once we have computed $\alpha(Z)$ for all $Z\in\mathcal{Z}$, we can obtain $\mu(I)$ by taking the minimum of these values. Thus, each value $\mu(I)$, and hence the whole function $\mu^*$ can be computed in time $2^{\mathcal{O}(k)}$.

Now, we compute $\Pi(F)$ and a collection of minimum $\beta$-sets for $\beta\in \Pi(F)$. 
We construct sets $\Pi^*$ and $\mathcal{I}^*$ and will show that $\Pi^*=\Pi(F)$ and $\mathcal{I}^*$ is a collection of minimum $\beta$-sets for $\beta\in \Pi(F)$.
For a function $\beta:2^{[k]}\to \{0,1,2\}$, we need to determine whether there is a $\beta$-set in $F$.
Let $\beta:2^{[k]}\to \{0,1,2\}$ be a function. 
\begin{enumerate}
    \item Let $I\subseteq [k]$. We say that a pair $(I_G, I_H)$ of subsets of $[k]$ is a \emph{split of $I$ with respect to $M$} if \begin{itemize}
        \item $I_G\cap I_H=I$, and
        \item for every $(i,j)\in M$, $i\in I_G$ or $j\in I_H$.
    \end{itemize}
    A split $(I_G, I_H)$ of $I$ is \emph{proper} if $\mu_G(I_G)+\mu_H(I_H)=\mu_F(I)$. 
    \item A pair $(\beta_G, \beta_H)$ of functions $\beta_G, \beta_H: 2^{[k]}\to \{0,1,2\}$ is  \emph{legitimate for $\beta$} if for all  subsets $I\subseteq [k]$, we have 
    \[ \beta(I)=\min \left(2,  \sum_{\substack{(I_G, I_H) :\\ \text{proper split of $I$}}} \Big( \beta_G(I_G)\times \beta_H(I_H) \Big) \right).\]
    \item Assume there is a legitimate pair $(\beta_G, \beta_H)$ for $\beta$ where $\beta_G\in \Pi(G)$ and $\beta_H\in \Pi(H)$ and $S_G$ is a minimum $\beta_G$-set in $\mathcal{I}_G$ and $S_H$ is a minimum $\beta_H$-set in $\mathcal{I}_H$. Then we add $\beta$ to $\Pi^*$ and add $S_G\cup S_H$ to $\mathcal{I}^*$. Otherwise, we do not add.
\end{enumerate}

\begin{claim}\label{claim:merge}
    The above procedure correctly computes $\Pi(F)$, that is, $\Pi^*=\Pi(F)$. Also, $\mathcal{I}^*$ is a collection of minimum $\beta$-sets for $\beta\in \Pi(F)$.
\end{claim}
\begin{clproof}
    First assume that $\beta\in \Pi(F)$. Then there is a $\beta$-set $S$ in $F$. 
    Let $S_G=S\cap V(G)$ and $S_H=S\cap V(H)$, and 
    let $\beta_{G}$ be the characteristic of $S_G$ in $G$, and $\beta_{H}$ be the characteristic of $S_H$ in $H$. Clearly, $\beta_{G}\in \Pi(G)$ and $\beta_{H}\in \Pi(H)$.

    We claim that $(\beta_{G}, \beta_{H})$ is a legitimate pair. By the construction, this will imply that $\beta\in \Pi^*$. 
    
    Let $I\subseteq [k]$. For a minimum vertex cover $T$ of $F$ with respect to $I$ and containing $S$, let 
    \begin{itemize}
        \item $T_G=T\cap V(G)$ and $T_H=T\cap V(H)$,
        \item $I_G=\full_G(T_G)$ and $I_H=\full_H(T_H)$.
    \end{itemize}  Since $T$ is a vertex cover of $F$ with respect to $I$, $(I_G, I_H)$ is a split of $I$ with respect to $M$. Observe that $T_G$ is a minimum vertex cover of $G$ with respect to $I_G$ and $T_H$ is a minimum vertex cover of $H$ with respect to $I_H$, as $T$ is a minimum vertex cover of $F$ with respect to $I$. Thus, $(I_G, I_H)$ is proper. 

    Let $t=\beta(I)$. 
    If $t=0$, then there is no minimum vertex cover $T$ with respect to $I$ and containing~$S$. Then for every proper split $(I_G, I_H)$ of $I$ with respect to $M$, either $G$ has no minimum vertex cover with respect to $I_G$ containing $S_G$, or $H$ has no minimum vertex cover with respect to $I_H$ containing $S_H$.
    Thus,  we have
    \[\sum_{(I_G, I_H)\text{ : proper split of $I$}} \Big( \beta_{G}(I_G)\times \beta_{H}(I_H) \Big) =0.\]

    Assume $t=1$. Then there is a unique minimum vertex cover $T$ with respect to $I$ and containing~$S$. So, there is a unique proper split $(I_G, I_H)$ of $I$ with respect to $M$, for which $G$ has a unique minimum vertex cover with respect to $I_G$ containing $S_G$, and  $H$ has a unique minimum vertex cover with respect to $I_H$ containing $S_H$.
    Thus, we have 
    \[\sum_{(I_G, I_H)\text{ : proper split of $I$}} \Big( \beta_{G}(I_G)\times \beta_{H}(I_H) \Big) =1.\]

    Lastly, assume $t=2$. Then there are at least two minimum vertex covers $T$ with respect to $I$ and containing $S$. Then either 
    \begin{itemize}
        \item there are at least two proper splits $(I_G, I_H)$ of $I$ with respect to $M$ where 
        $G$ has a minimum vertex cover with respect to $I_G$ containing $S_G$, and  $H$ has a minimum vertex cover with respect to $I_H$ containing $S_H$,
        or
        \item there is a proper split $(I_G, I_H)$ of $I$ with respect to $M$ where 
        either 
        \begin{itemize}
            \item $G$ has at least two minimum vertex cover with respect to $I_G$ containing $S_G$ and $H$ has a minimum vertex cover with respect to $I_H$ containing $S_H$, or 
            \item $G$ has a minimum vertex cover with respect to $I_G$ containing $S_G$ and $H$ has at least two minimum vertex covers with respect to $I_H$ containing $S_H$.
        \end{itemize}
    \end{itemize} 
    Therefore, we have 
    \[\sum_{(I_G, I_H)\text{ : proper split of $I$}} \Big( \beta_{G}(I_G)\times \beta_{H}(I_H) \Big) \ge 2,\]
    as required. 

    \medskip
    Now, for the other direction, suppose that $\beta\in \Pi^*$. Then there is a legitimate pair $(\beta_G, \beta_H)$ where $\beta_G\in \Pi(G)$ and $\beta_H\in \Pi(H)$.
    So, there is a $\beta_G$-set $S_G$ of $G$ and a $\beta_H$-set $S_H$ of $H$. Let $S=S_G\cup S_H$. We claim that $S$ is a $\beta$-set. This will imply that $\beta\in \Pi(F)$.

    Let $I\subseteq [k]$ and $t=\beta(I)$. 
    Assume that $t=0$. Since $(\beta_G, \beta_H)$ is legitimate for $\beta$, 
     \[ 0=\beta(I)=\sum_{(I_G, I_H)\text{ : proper split of $I$}} \Big( \beta_G(I_G)\times \beta_H(I_H) \Big).\]
     Since $S_G$ is a $\beta_G$-set and $S_H$ is a $\beta_H$-set, the above equality implies that for every proper split $(I_G, I_H)$ of $I$, either there is no minimum vertex cover of $G$ with respect to $I_G$ and containing $S_G$, or there is no minimum vertex cover of $H$ with respect to $I_H$ and containing $S_H$. Thus, there is no minimum vertex cover of $F$ with respect to $I$ and containing $S$.

     Assume that $t=1$. Since $(\beta_G, \beta_H)$ is legitimate for $\beta$, 
     \[ 1=\beta(I)=\sum_{(I_G, I_H)\text{ : proper split of $I$}} \Big( \beta_G(I_G)\times \beta_H(I_H) \Big).\]
     This shows that there is a unique proper split $(I_G, I_H)$ of $I$ such that $G$ has a unique minimum vertex cover with respect to $I_G$ and containing $S_G$, and  $H$ has a unique minimum vertex cover with respect to $I_H$ and containing $S_H$. Thus, there is a unique minimum vertex cover of $F$ with respect to $I$ and containing $S$.

     Assume $t=2$.
     In this case, we have
     \[ \sum_{(I_G, I_H)\text{ : proper split of $I$}} \Big( \beta_G(I_G)\times \beta_H(I_H) \Big)\ge 2.\]
     Thus, either 
    \begin{itemize}
         \item there are at least two proper splits $(I_G, I_H)$ of $I$ with respect to $M$ where $G$ and $H$ have minimum vertex covers with respect to $I_G$ and $I_H$, respectively, or
        \item there is a proper split $(I_G, I_H)$ of $I$ with respect to $M$ where one of $G$ and $H$ has at least two minimum vertex covers with respect to $I_G$ or $I_H$, and the other has a minimum vertex cover with respect to $I_G$ or $I_H$. 
    \end{itemize} 
    By combining vertex covers of $G$ and $H$, there are at least two minimum vertex covers of $F$ with respect to $I$ and containing $S$.

    As discussed above, when $(\beta_G, \beta_H)$ is a legitimate pair for $\beta$ where $\beta_G\in \Pi(G)$ and $\Pi(H)$ and $S_G$ is a minimum $\beta_G$-set and $S_H$ is a minimum $\beta_H$-set, $S_G\cup S_H$ is a minimum $\beta$-set in $G$. Thus, $\mathcal{I}^*$ is a collection of $\beta$-sets for $\beta\in \Pi(F)$.
\end{clproof}

Observe that the number of possible functions $\beta:2^{[k]}\to \{0,1,2\}$ is at most $3^{2^{k}}$. Let $\beta$ be such a function. For each $I\subseteq [k]$, there are at most $3^{k-|I|}\le 3^k$ splits of $I$ with respect to $M$. Thus, for a fixed pair $(\beta_G, \beta_H)$ of functions, one can test whether $(\beta_G, \beta_H)$ is legitimate for $\beta$ in time $\mathcal{O}( 2^{k} \cdot 3^{k})$. Thus, we can determine $\Pi^*$ and $\mathcal{I}^*$ in time $2^{\mathcal{O}(2^k)}|V(F)|$.
This concludes the proof.
\end{proof}

\begin{lemma}\label{lem:labelchange}
    Let $k$ be a positive integer, and let $(G, \lab_G)$ be a $k$-labeled graph. Let $R:[k]\to [k]$ be a function and let $(F, \lab_F)=\rho_R(G, \lab_G)$. 
    
    Given $\Pi(G)$, $\mu_G$, and a collection $\mathcal{I}_G$ of minimum $\beta$-sets for $\beta\in \Pi(G)$, one can compute $\Pi(F)$, $\mu_F$ and a collection $\mathcal{I}_F$ of minimum $\beta$-sets for $\beta\in \Pi(F)$ in time $2^{\mathcal{O}(2^k)} |V(G)|$.
\end{lemma}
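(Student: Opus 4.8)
The plan is to mimic the structure of the proof of Lemma~\ref{lem:product}, but the situation here is considerably simpler because relabeling does not change the underlying graph: $V(F)=V(G)$, $E(F)=E(G)$, and every vertex cover of $F$ is a vertex cover of $G$ and vice versa. The only thing that changes is the labeling function, hence the value of $\full$. So first I would record the key observation: for a set $X\subseteq V(G)=V(F)$ and $i\in[k]$, we have $\lab_F^{-1}(i)=\bigcup_{j\in R^{-1}(i)}\lab_G^{-1}(j)$, and therefore $i\in\full_F(X)$ if and only if $R^{-1}(i)\subseteq\full_G(X)$. Consequently, $\full_F(X)=\{\,i\in[k] : R^{-1}(i)\subseteq\full_G(X)\,\}$; call this set $\widehat R(J)$ when $\full_G(X)=J$, i.e. $\widehat R(J)=\{i\in[k]: R^{-1}(i)\subseteq J\}$. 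Note $\widehat R$ depends only on $R$ and can be precomputed in time $\mathcal{O}(2^k\cdot k)$.

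Next I would compute $\mu_F$. A vertex cover $X$ of $F$ with $\full_F(X)=I$ is exactly a vertex cover $X$ of $G$ with $\widehat R(\full_G(X))=I$; minimizing $|X|$ over such $X$ amounts to minimizing over all $J\subseteq[k]$ with $\widehat R(J)=I$ of $\mu_G(J)$. Hence set
\[
\mu_F(I)=\min\{\,\mu_G(J) : J\subseteq[k],\ \widehat R(J)=I\,\},
\]
with the convention that the minimum over an empty set is $+\infty$ (meaning no vertex cover of $F$ has full-set exactly $I$). This is correct because a minimum vertex cover of $G$ with respect to $J$ is in particular a vertex cover of $F$ whose full-set is $\widehat R(J)$, and conversely any vertex cover of $F$ with full-set $I$ has some full-set $J$ in $G$ with $\widehat R(J)=I$ and size at least $\mu_G(J)$; among those it has size at least $\min_{J}\mu_G(J)$. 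This takes $\mathcal{O}(2^k)$ time total.

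Now for $\Pi(F)$ and the minimum $\beta$-sets. The point is that a set $S\subseteq V(F)=V(G)$ is the \emph{same} set in $G$, so I would take its characteristic $\gamma\in\Pi(G)$ in $G$ and show how $\gamma$ determines the characteristic of $S$ in $F$. Fix $I\subseteq[k]$. The minimum vertex covers of $F$ with respect to $I$ containing $S$ are precisely the minimum vertex covers $X$ of $G$ with respect to some $J$ with $\widehat R(J)=I$ and $\mu_G(J)=\mu_F(I)$, containing $S$. So if I let $\Gamma_I=\{J\subseteq[k] : \widehat R(J)=I,\ \mu_G(J)=\mu_F(I)\}$ (empty when $\mu_F(I)=+\infty$), then the number of such vertex covers equals $\sum_{J\in\Gamma_I}(\text{number of minimum v.c.\ of }G\text{ w.r.t.\ }J\text{ containing }S)$; a subtlety to note is that the sets $\lab_G^{-1}(j)$-full-sets $J$ are mutually exclusive so these vertex-cover families are disjoint and the counts simply add. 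Therefore define, for $\gamma\in\Pi(G)$, the function $T_R(\gamma)$ by
\[
\bigl(T_R(\gamma)\bigr)(I)=\min\Bigl(2,\ \sum_{J\in\Gamma_I}\gamma(J)\Bigr),
\]
and set $\Pi^*=\{T_R(\gamma):\gamma\in\Pi(G)\}$. For each $\beta\in\Pi^*$, among all $\gamma\in\Pi(G)$ with $T_R(\gamma)=\beta$ pick one whose stored minimum $\gamma$-set $S_\gamma$ has smallest size, and put that $S_\gamma$ into $\mathcal{I}^*$. The correctness argument is: $\Pi^*\subseteq\Pi(F)$ because any $\gamma$-set in $G$ is a $T_R(\gamma)$-set in $F$ by the displayed identity; and $\Pi(F)\subseteq\Pi^*$ because any $\beta$-set $S$ in $F$, viewed in $G$, has some characteristic $\gamma\in\Pi(G)$, and then $\beta=T_R(\gamma)$ again by the same identity. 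Minimality of the chosen sets follows since the family of $\beta$-sets in $F$ is exactly the union over $\gamma$ with $T_R(\gamma)=\beta$ of the families of $\gamma$-sets in $G$. For the running time: there are at most $3^{2^k}$ functions $\gamma$, and for each one computing $T_R(\gamma)$ requires iterating over the $2^k$ sets $I$ and, for each, summing over $J\in\Gamma_I$ (all the $\Gamma_I$ together partition a subset of $2^{[k]}$, so this is $\mathcal{O}(2^k)$ work per $\gamma$); comparing set sizes costs $\mathcal{O}(|V(G)|)$. Altogether $\mathcal{O}(3^{2^k}\cdot 2^k\cdot|V(G)|)=\mathcal{O}(9^{2^k}\cdot|V(G)|)$.

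The main obstacle — really the only place that needs care rather than bookkeeping — is making the bijection between vertex covers of $F$ and of $G$ fully precise at the level of "with respect to $I$", i.e. verifying that the families counted by $\gamma(J)$ over $J\in\Gamma_I$ are genuinely disjoint (they are, since a given $X$ has a unique $\full_G(X)$) and that every minimum vertex cover of $F$ with respect to $I$ is a minimum vertex cover of $G$ with respect to its own $G$-full-set $J$ (it is, by the usual "a smaller one in $G$ would be a smaller one in $F$" argument, using $\widehat R(J)=I$). Once that is nailed down, the three-valued arithmetic with the $\min(2,\cdot)$ cap goes through exactly as in the proof of Claim~\ref{claim:merge}.
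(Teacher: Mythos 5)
Your proposal is correct and follows the same overall strategy as the paper's proof (translate label classes of $F$ back to those of $G$, read off $\mu_F$ from $\mu_G$, and derive the characteristic of a set $S$ in $F$ from its characteristic in $G$), but it is genuinely more careful at the one point where care is needed. You observe that $\full_F(X)=\widehat R(\full_G(X))$ with $\widehat R(J)=\{i\in[k]:R^{-1}(i)\subseteq J\}$, so the vertex covers of $F$ with respect to $I$ are exactly the vertex covers of $G$ with respect to some $J$ in the fibre $\{J:\widehat R(J)=I\}$, which in general contains more than the single set $R^{-1}(I)$; accordingly you minimise over that fibre to get $\mu_F(I)$ and take a capped sum over $\Gamma_I$ to get the characteristic, correctly noting that the counted families are disjoint because $\full_G(X)$ is unique. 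The paper's own proof works only with $R^{-1}(I)$, setting $\mu_F(I)=\mu_G(R^{-1}(I))$ and matching $\gamma(R^{-1}(I))$ with $\beta(I)$, which for non-injective $R$ misses the rest of the fibre: for a single edge with endpoints labelled $1$ and $2$ and $R(1)=R(2)=1$, one has $\mu_F(\{2\})=1$ while $R^{-1}(\{2\})=\emptyset$ and no vertex cover of $G$ has empty full-set, and the $F$-characteristic of $S=\emptyset$ equals $2$ at $\{2\}$, a value that cannot be read off from $\gamma(\emptyset)$ alone. So your aggregation over the whole fibre is not just an alternative presentation; it supplies exactly what is needed for the lemma to hold for arbitrary relabelling functions, at no extra asymptotic cost (your bound $\mathcal{O}(3^{2^k}\cdot 2^k\cdot|V(G)|)$ is comfortably within the claimed $\mathcal{O}(9^{2^k}\cdot|V(G)|)$).
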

\begin{proof}
We first compute $\mu_F$. Observe that for each $I\subseteq [k]$,
$\{v\in V(F):\lab_F(v)\in I\}$ is the same as $\{v\in V(G):\lab_G(v)\in R^{-1}(I)\}$.
Thus, we have $\mu_F(I)=\mu_G(R^{-1}(I))$. So, $\mu_F$ can be computed in time $\mathcal{O}(2^k)$.

Next, we compute $\Pi(F)$ and $\mathcal{I}_F$. 
We construct sets $\Pi^*$ and $\mathcal{I}^*$. Given a function $\beta:2^{[k]}\to \{0,1,2\}$, we define a function $\widehat{\beta}$ whose domain is $\mathcal{U}=\{R^{-1}(I):I\subseteq [k]\}$ such that for every $J=R^{-1}(I)$ with $I\subseteq [k]$, we have $\widehat{\beta}(J)=\beta(I)$.
We do the following.
\begin{itemize}
    \item Assume that there is a function $\gamma:2^{[k]}\to \{0,1,2\}$ such that $\gamma_{|_{\mathcal{U}}}=\widehat{\beta}$.
    Then we add $\beta$ to $\Pi^*$. 
    Among all functions $\gamma:2^{[k]}\to \{0,1,2\}$ such that $\gamma_{|_{\mathcal{U}}}=\widehat{\beta}$, 
    we choose one where the $\gamma$-set $S$ in $\mathcal{I}_G$ has minimum number of vertices, and we add this $S$ to $\mathcal{I}^*$ as a minimum $\beta$-set.
    \item If there is no such a function $\gamma$, then we do not add for $\beta$. 
    \end{itemize}  
    It is straightforward to verify that $\Pi^*=\Pi(F)$ and $\mathcal{I}^*$ is a collection of minimum $\beta$-sets for $\beta\in \Pi(F)$.    This can be done in time $2^{\mathcal{O}({2^k})} |V(G)|$.
\end{proof}

Now, we are ready to prove Theorem~\ref{thm:maincliquewidth}.

\begin{proof}[Proof of Theorem~\ref{thm:maincliquewidth}]
    Using an algorithm by Oum~\cite{Oum2009}, we can compute a clique-width $(2^{3t+2}-1)$-expression of a graph of clique-width $t$ in time $\mathcal{O}(8^t |V(G)|^4)$ as explained in the preliminary section. In the rest, we discuss how to obtain an algorithm if a clique-width expression is given. 

    Let $G$ be a graph and assume that its clique-width $k$-expression is given. 
    By Theorem~\ref{thm:transform}, we can transform it into an NLC-width $k$-expression $\phi$  in polynomial time.

    We design a bottom-up dynamic programming along the NLC-width $k$-expression. At each $k$-labeled graph $(F, \lab_F)$ arising in $\phi$, we compute sets $\Pi(F)$, $\mu_F$, and a collection $\mathcal{I}_F$ of minimum $\beta$-sets for $\beta\in \Pi(F)$ as follows.
    \begin{enumerate}
    \item Assume $(F, \lab_F)=i(x)$, that is, $F$ is a graph on a vertex $x$ with label $i$.  
    \begin{itemize}
        \item Observe that $\mu_F(\{i\})=1$ because $\{x\}$ is the unique minimum vertex cover of $F$ with respect to $\{x\}$.  Also, $\mu_F(\emptyset)=0$ because $\emptyset$ is the unique minimum vertex cover of $F$ with respect to $\emptyset.$ For other subsets $I$ of $[k]$, $\mu_F(I)=\infty$, as there is no vertex cover of $F$ with respect to $I$.
        \item Note that the empty set has the characteristic $\beta_0$ where $\beta_0(J)=1$ for $J=\{i\}$ or $\emptyset$, and $\beta_0(J)=0$ otherwise. The set $\{x\}$ has characteristic $\beta_1$ where $\beta_1(J)=1$ for $J=\{i\}$, and $\beta_1(J)=0$ otherwise. 
        Let $\Pi(F)=\{\beta_0, \beta_1\}$.
        We store the empty set as a minimum $\beta_0$-set, and $\{x\}$ as a minimum $\beta_1$-set. 
        \item These can be computed in time $\mathcal{O}(k)$.
    \end{itemize}
    \item Assume that $(F, \lab_F)=\rho_R(F_1, \lab_1)$ for some function $R:[k]\to [k]$. By Lemma~\ref{lem:labelchange}, we can in time $2^{\mathcal{O}(2^k)} |V(F)|$ compute  $\Pi(F)$, $\mu_F$, and a collection $\mathcal{I}_F$ of minimum $\beta$-sets for $\beta\in \Pi(F)$.
    \item Assume that $(F, \lab_F)=(F_1, \lab_1)\times_M (F_2, \lab_2)$ for some $M\subseteq [k]^2$. By Lemma~\ref{lem:product}, we can in time $2^{\mathcal{O}(2^k)} |V(F)|$ compute $\Pi(F)$, $\mu_F$, and a collection $\mathcal{I}_F$ of minimum $\beta$-sets for $\beta\in \Pi(F)$.
\end{enumerate}
At the end, by Lemma~\ref{lem:inclusionfinalstep}, we can solve \textsc{Min PAU-VC} in time $2^{\mathcal{O}(2^k)} |V(G)|$. Note that there are at most $\mathcal{O}(k^2 |V(G)|)$ operations in the NLC-width $k$-expression.  Thus, in total, we can solve \textsc{Min PAU-VC} in time $2^{\mathcal{O}(2^k)} |V(G)|^2$.
\end{proof}

\section{Unit interval graphs}\label{sec:unitinterval}
In this section, we give a linear time algorithm for \textsc{Min PAU-VC} on unit interval graphs.
A graph is a unit interval graph if there is a set $\mathcal I$ of intervals of length one on the real line so that $G$ is the intersection graph of $\mathcal I$. We refer to Figure~\ref{fig:unit_interval}.

\begin{figure}[t]
\includegraphics[width=\columnwidth]{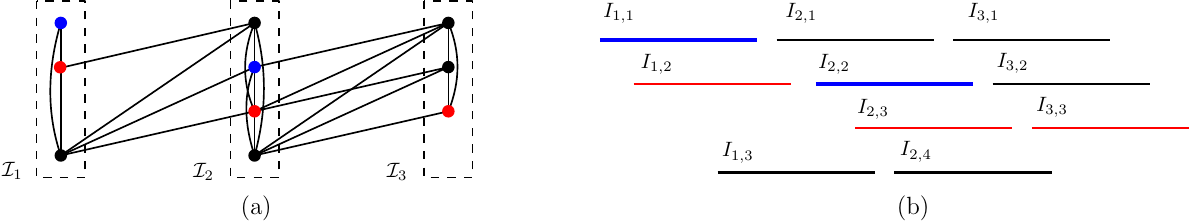}
\caption{An illustration of the algorithm for a unit interval graph $G$. Figures (a) and (b) represent the unit interval graph $G$ and its representation, respectively.
The algorithm returns $\{I_{1,1}, I_{2,2}\}$ as an optimal solution of \textsc{Min PAU-VC} of $G$, where $\{I_{1,1}, I_{2,2}\}= S[3,3]$ as $S[1,2]=\{I_{1,1}\}, A_{1,2,3}=\{I_{2,2}\},$ and $A_{2,3,3}=\emptyset$. Precisely, $S[3,3]=S[1,2]\cup A_{1,2,3}\cup A_{2,3,3}$. }
\label{fig:unit_interval}
\end{figure}
\begin{theorem}\label{thm:unitinterval}
    \textsc{Min PAU-VC} can be solved in linear time on unit interval graphs.
\end{theorem}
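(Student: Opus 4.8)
The plan is to exploit the well-known linear structure of unit interval graphs. Sort the intervals by left endpoint (equivalently, fix a canonical vertex ordering $v_1, \dots, v_n$ so that for every vertex the set of later neighbors forms a contiguous block, and the closed neighborhoods are ``consecutive''). Because a unit interval graph is a disjoint union of such ``indifference'' pieces, it suffices to handle one connected component and then take unions over components as in the clique-width argument (and the empty graph case is trivial). The key structural fact I would use is that in a connected unit interval graph with the interval order, a minimum vertex cover is highly constrained: roughly, one is forced to leave out a maximal independent set, and such maximal independent sets can be read off greedily from left to right. So the algorithm's backbone is a left-to-right sweep/dynamic program over the ordered intervals.

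\textbf{Setting up the DP.} First I would compute a minimum vertex cover size $\mu$ and, simultaneously, enough bookkeeping to count (capped at $2$) the minimum vertex covers. Concretely, for each prefix of the interval order I would maintain states indexed by a bounded amount of ``boundary'' information: which of the last few intervals (those still overlapping the sweep line) are in the cover, together with the minimum partial cover size consistent with that boundary and a capped count $\in\{0,1,2\}$ of how many ways achieve it. Since a vertex interacts only with a bounded clique around it, the boundary information is of bounded size, so each step is $O(1)$ and the whole sweep is linear. This gives, for the whole graph, the number (capped at $2$) of minimum vertex covers and hence tells us in particular whether $G$ already has a unique minimum vertex cover (answer $S=\emptyset$).

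\textbf{From counting to pre-assignment.} The real content is: find a smallest $S$ so that exactly one minimum vertex cover contains $S$. The guiding observation is that the minimum vertex covers of a connected unit interval graph differ from one another only in ``local swaps'' along the line — moving which vertex of a small clique is excluded — so the set of all minimum vertex covers has a product-like structure along the order. I would make this precise by identifying, from the DP, the ``free'' positions where a choice exists among minimum covers, showing these choices are (essentially) independent, and observing that forcing a unique minimum vertex cover amounts to pinning down every free choice. Each free choice can be pinned by putting into $S$ a single well-chosen vertex (the vertex that belongs to exactly one of the competing local options), so the optimum $|S|$ equals the number of independent free choices, and an optimal $S$ is obtained by selecting one pinning vertex per choice. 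This can again be extracted in a single linear-time pass that walks through the ordered intervals and, whenever the DP indicates more than one locally optimal continuation, greedily commits to one of them and records the corresponding pinning vertex. I would present this as: $S[\,\cdot\,]$ built prefix-by-prefix together with correction sets $A_{\cdot,\cdot,\cdot}$ (matching the notation hinted at in Figure~\ref{fig:unit_interval}), argue $S = S[n]$ is a valid pre-assignment, and argue by an exchange argument that no smaller set works — any pre-assignment must, for each independent free choice, contain a vertex distinguishing the competing options, and these are distinct for distinct choices.

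\textbf{Main obstacle.} The subtle point is establishing the ``independence of free choices'' cleanly: a priori, fixing one local choice in a minimum vertex cover could propagate and eliminate choices elsewhere, so the naive claim ``$|S|$ = number of local alternatives'' could fail, and one must be careful near the ends of the line and at cut vertices where the clique structure thins out. I expect to spend most of the effort proving a structural lemma that the family of minimum vertex covers of a connected unit interval graph decomposes as a ``chain of independent blocks'' along the interval order — i.e.\ a minimum vertex cover is determined by, and any combination of, one choice per block — and that these blocks, their sizes, and a pinning vertex for each can all be identified during the linear sweep. Once that lemma is in hand, both the upper bound (the constructed $S$ works) and the matching lower bound (via the distinguishing-vertex exchange argument) follow, and linear running time is immediate since everything is done in $O(1)$ per interval after an initial $O(n+m)$ computation of the interval order.
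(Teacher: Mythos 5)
Your overall framework---a left-to-right sweep over the interval order with a prefix dynamic program---is in the same spirit as the paper's algorithm, which partitions the intervals into consecutive cliques $\mathcal I_1,\dots,\mathcal I_m$ (one per vertex of a greedily chosen maximum independent set) and uses the fact that every minimum vertex cover omits exactly one vertex from each clique. However, the core quantitative claim of your argument, that ``each free choice can be pinned by putting into $S$ a single well-chosen vertex, so the optimum $|S|$ equals the number of independent free choices,'' is false. Already for a single clique $K_n$ (which is a unit interval graph) there is one free choice, namely which vertex to exclude from the cover, yet any pre-assignment of fewer than $n-1$ vertices is contained in at least two minimum vertex covers; the optimum is $|S|=n-1$, not $1$. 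In general, pinning the choice in a clique $\mathcal I_i$ requires placing into $S$ \emph{every} vertex of $\mathcal I_i$ that could legally be swapped with the designated excluded vertex, so the cost of a block is the number of such alternatives, not one.

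Moreover, the set of legal alternatives inside $\mathcal I_i$ is not intrinsic to that block: a vertex $I\in\mathcal I_i$ can replace the excluded vertex only if $I$ is non-adjacent to the vertices excluded in the neighbouring cliques $\mathcal I_{i-1}$ and $\mathcal I_{i+1}$. Hence the blocks are coupled along the chain, the choices are not independent, and the total cost cannot be minimized block-by-block. This is exactly why the paper runs a dynamic program $S[i,j]$ indexed by which vertex $I_{i,j}$ of clique $\mathcal I_i$ is excluded, with transition cost $|A_{i,j',j}|$ counting the swap candidates determined by the \emph{pair} of excluded vertices in consecutive cliques, and then optimizes jointly over the whole chain. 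Your proposal identifies this coupling as the main obstacle but resolves it with an independence lemma that does not hold; without replacing the per-choice cost of $1$ by the correct alternative-counting cost and without the joint optimization over consecutive exclusion choices, both your construction of $S$ and your matching lower bound (the one-distinguishing-vertex exchange argument) fail.
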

\begin{proof}
Let $G$ be a given unit interval graph. 
We can find a set $\mathcal I$ of unit intervals representing $G$ in linear time~\cite{corneil1995simple}.
Furthermore, the obtained $\mathcal I$ is sorted by the left end points.
For clarity, we refer to the intervals in $\mathcal I$ as the vertices of~$G$. By perturbing if necessary, we may assume that all intervals are pairwise distinct.

First, we find a maximum independent set $\{\tilde I_1,\dots, \tilde I_m\}$ of $G$ as follows: $\tilde I_1$ is the leftmost interval in $\mathcal I$ and $\tilde I_{i+1}$ is the leftmost interval in $\mathcal I$ disjoint to $\tilde I_j$ for all $j\in[i]$. 
It is easy to see that the obtained set is a maximum independent set of $G$.
Then, for each $\tilde{I}_i$, let $\mathcal{I}_i$ be the set of intervals in $\mathcal{I}$ that start after $\tilde{I}_i$ and intersect with $\tilde{I}_i$, along with $\tilde{I}_i$ itself.
Then $\{\mathcal I_1,\dots, \mathcal I_m\}$ gives a partition of $\mathcal I$, and each $\mathcal I_i$ forms a clique in $G$.
Note that if two intervals $I\in \mathcal I_i$ and $J\in \mathcal I_j$ are intersecting, then $|i-j|\leq 1$ since every interval has a unit length.

\begin{claim}\label{lem:max_excluding}
    Every minimum vertex cover of $G$ excludes exactly one vertex in $\mathcal I_i$ for each $i\in[m]$. 
\end{claim}
\begin{clproof}
    It is well known that the complement of a maximum independent set is a minimum vertex cover, and vice versa.
    Since $m$ is the size of the maximum independent set of $G$, every minimum vertex cover should exclude $m$ vertices.
    Note that each $\mathcal I_i$ forms a clique in $G$, it cannot exclude more than one vertex from the same $\mathcal I_i$.
    Thus, exactly one vertex for each $\mathcal I_i$ is excluded.
\end{clproof}

Now we describe a dynamic programming to solve \textsc{Min PAU-VC} for $G$.

Let $i\in [m]$ and $j\in [|\mathcal I_i|]$.
Let $I_{i,j}$ be the $j$th leftmost interval in $\mathcal I_i$, and let $G_{i,j}$ be the subgraph of $G$ induced by the intervals that start before $I_{i,j}$ together with $I_{i,j}$.

For $j\in [|\mathcal I_1|]$, let 
$A_{1,j}\coloneqq \{I_{1,i}:i<j\}$. 
For $1\le i\le m-1$ and $a\in [|\mathcal I_i|]$ and $b\in [|\mathcal I_{i+1}|]$, let 
\begin{align*}
    A_{i,a,b}\coloneqq
    &\{I_{i,x} : a<x \textnormal{ and } I_{i,x}\cap I_{i+1,b}=\emptyset\} \\
    &\cup \{I_{i+1,y} : y<b \textnormal{ and } I_{i+1,y}\cap I_{i,a}=\emptyset\}.
\end{align*}
Intuitively, $A_{1,j}$ and $A_{i,a,b}$ are sets of vertices that have to be included in a set forcing a unique minimum vertex cover $S$ when we want that $I_{i,a}$ and $I_{i+1, b}$ are not in $S$.

Now, for each $I_{i,j}\in \mathcal I$, we denote by $S[i,j]$ a smallest vertex set in $G_{i,j}$ such that among all minimum vertex covers of $G_{i,j}$, exactly one contains $T$, and the unique vertex cover excludes $I_{i,j}$. 
 
We compute $S[i,j]$ in a lexicographic order.
As the base case, we set $S[1,j]\coloneqq A_{1,j}$ for each $j\in [|\mathcal I_1|]$.
By assuming that every $S[i,j']$ has been computed for $j'\in [|\mathcal I_i|]$, we set $S[i+1,j]$ as the smallest vertex set among
\begin{align*}
    S[i,j'] \cup A_{i,j',j}
\end{align*}
where $I_{i,j'}\in \mathcal I_i$ is disjoint from $I_{i+1,j}$.
Note that $S[i+1, j]$ is well-defined, because $I_{i,1}=\tilde I_i$ is disjoint from $I_{i+1,1}=\tilde I_{i+1}$ along with $I_{i+1,j}$.

\begin{claim}\label{claim:correct_unit_int}
    For every $i\in [m]$ and $j\in [|\mathcal I_i|]$, $S[i,j]$ is a smallest vertex set in $G_{i,j}$ such that 
    among all minimum vertex covers of $G_{i,j}$, exactly one contains $S[i,j]$, and the unique vertex cover excludes~$I_{i,j}$. 
\end{claim}
\begin{clproof}
    We prove this by induction on $i+j$.
    First, assume that $i=1$. Then $G_{1,j}$ is a complete graph. Thus, there is only one vertex cover excluding $I_{1,j}$, namely $V(G_{1,j})\setminus \{I_{1,j}\}$. If $S[1,i]$ does not contain all vertices of $V(G_{1,j})\setminus \{I_{1,j}\}$, then we may include $I_{1,j}$ to obtain another minimum vertex cover of $G_{1,j}$. Thus, $S[1,j]$ should be exactly $V(G_{1,j})\setminus \{I_{1,j}\}$.

    Suppose that $i\ge 2$.
    Let $T$ be a minimum vertex set in $G_{i,j}$ such that 
     among all minimum vertex covers of $G_{i,j}$, exactly one contains $T$, and the unique vertex cover excludes~$I_{i,j}$. 
    
    We will show that $T$ is of the form $S[i-1,j']\cup A_{i-1,j',j}$ for some $j'\in [|\mathcal I_{i-1}|]$. Let $U$ be the unique minimum vertex cover in $G_{i,j}$ containing $T$, and let $W=V(G_{i,j})\setminus U$. 
    Note that $U$ does not contain $I_{i,j}$ and $W$ contains $I_{i,j}$.
    Observe that $W$ contains exactly one vertex from each of {$\mathcal I_1, \ldots, \mathcal I_i$}. Let $j_1, \ldots, j_{i-1}$ be integers such that $W=\{I_{1, j_1}, I_{2, j_2}, \ldots, I_{i-1, j_{i-1}}, I_{i,j}\}$.

    We claim that 
    \[A^*\coloneqq A_{1,j_1}\cup \left( \bigcup_{x\in [i-1]} A_{x,j_x,j_{x+1}}\right) \subseteq T. \]
    Suppose for contradiction that this is not true. 
    
    First assume that $A_{1,j_1}$ contains a vertex $I$ that is not in $T$.
    We show that $(U\setminus \{I\})\cup \{I_{1,j_1}\}$ is also a minimum vertex cover of $G_{i,j}$ containing $T$.  Note that $I_{2,j_2}$ is the unique vertex of $\mathcal{I}_2\cap V(G_{i,j})$ that is not contained in $U$. Since $I$ and $I_{2,j_2}$ are not adjacent, all neighbors of $I$ in $\mathcal{I}_2$ are contained in $U$. Thus, all edges between $I$ and $\mathcal{I}_2$ are covered by the neighbors of $I$. Moreover, the edges between $I$ and $\mathcal{I}_1$ are also covered by the neighbors of $I$. Therefore, replacing $I$ with $I_{1,j_1}$ yields a vertex cover.
    This implies that there are two minimum vertex covers of $G_{i,j}$ containing $T$, a contradiction.

    We assume that for some $x\in [i-1]$, $A_{x,j_x,j_{x+1}}$ contains a vertex $I$ that is not in $T$. If 
    \[I\in \{I_{x,z} : j_x<z \textnormal{ and } I_{x,z}\cap I_{x+1,j_{x+1}}=\emptyset\},\]
    then $(U\setminus \{I\})\cup \{I_{x, j_x}\}$ is a minimum vertex cover of $G_{i,j}$. Indeed, if this is not a vertex cover, then there is an edge $IJ$ that is not covered by $(U\setminus \{I\})\cup \{I_{x, j_x}\}$, and we have 
    \[J\in \{I_{x+1,y}:y<b\text{  and }I_{x+1,y}\cap I_{x,j_x}=\emptyset\}.\]
    But then $J$ is a vertex in $U$, and $IJ$ is covered. This means that there are two minimum vertex covers containing $T$, a contradiction. 
    Otherwise, if 
    \[I\in \{I_{x+1,z} : z<j_{x+1} \textnormal{ and } I_{x+1,z}\cap I_{x,j_x}=\emptyset\},\]
    then by a similar argument, $(U\setminus \{I\})\cup \{I_{x+1, j_{x+1}}\}$ is a minimum vertex cover of $G_{i,j}$, a contradiction.  Therefore, $A^*\subseteq T$. 

    Now, we verify that $A^*$ already forces that $U$ is a unique minimum vertex cover containing $A^*$. Suppose there is another minimum vertex cover $U'$ containing $A^*$. As $U'\neq U$, $U'$ does not contain a vertex of $U\setminus A^*$. Then $G_{i,j}-U'$ cannot contain an independent set of size $i$, a contradiction. 
    
    By our construction, $T$ is $S[i-1,j']\cup A_{i-1,j',j}$ for some $j'$ where $I_{i-1,j'}\in \mathcal{I}_{i-1}$ disjoint from $I_{i, j}$. On the other hand, we compute $S[i,j]$ as a smallest vertex set among all possible $S[i-1,j']\cup A_{i-1, j', j}$. Thus, $S[i,j]$ is a smallest vertex set in $G_{i,j}$ that forces a unique minimum vertex cover in $G_{i,j}$ and the vertex cover excludes $I_{i,j}$.
\end{clproof}

Furthermore, the smallest vertex set among \[S[m,j]\cup \{I_{m,k}\in \mathcal I_m : j<k\}\] for $j\in[|\mathcal I_m|]$ is an optimal solution of \textsc{Min PAU-VC} for $G$.

This algorithm returns a solution in polynomial time.
In the following, we slightly modify it as a linear time algorithm.

\subparagraph{Linear time algorithm.}
We compute the interval set $\mathcal I$ corresponding to the given unit interval graph $G$, together with its decomposition $\mathcal I_1,\dots, \mathcal I_m$ as described above. It takes a linear time.
To obtain a linear time algorithm for \textsc{Min PAU-VC}, 
we compute and store the size $s[i+1,j]$ of the set $S[i+1,j]$ for each $(i+1,j)$ with $i\in [m-1]$ and $j\in [|\mathcal I_{i+1}|]$ instead of explicitly constructing the set $S[i+1,j]$.
Since the set $S[i+1,j]$ is the union of disjoint sets $S[i,j']$ and $A_{i,j',j}$, we can compute $s[i+1,j]$ without explicitly constructing $S[i,j]$.
Furthermore, we also store the index $j'\in [|\mathcal I_i|]$ at the pair $(i+1,j)$ so that $S[i+1,j]=S[i,j']\cup A_{i,j',j}$.
We can compute all values of $s[\cdot,\cdot]$ in $\mathcal{O}(|\mathcal I|)$ time.

\begin{claim}\label{claim:bound_opearation}
    We can compute all $s[\cdot,\cdot]$ in $\mathcal{O}(|\mathcal I|)$ time.
\end{claim}
\begin{clproof}
    We set $s[1,j]=j-1$ by the definition.
    For an index $i\in [m-1]$, we assume that every $s[i,\cdot]$ is computed already, and describe how to compute all $s[i+1,\cdot]$ in $\mathcal O(|\mathcal I_i\cup \mathcal I_{i+1}|)$ time which directly implies the claim.
    For this, we first compute an index $k(j)\in [|\mathcal I_{i}|]$ for each $j\in [|\mathcal I_{i+1}|]$ so that 
    the interval $I_{i,k(j)}$ is the rightmost interval in $\mathcal I_i$ disjoint from $I_{i+1,j}$.
    Since $\mathcal I_i$ and $\mathcal I_{i+1}$ are sorted, the indices $k(j)$'s are monotonically increasing. Furthermore, we can compute all $k(j)$'s in $\mathcal O(|\mathcal I_i\cup \mathcal I_{i+1}|)$ time.
    For clarity, we set $k(0)=0$ in the following.
    Note that $I_{i,j'}\cap I_{i+1,j}=\emptyset$ if and only if $j'\leq k(j)$ for $j\in [|\mathcal I_{i+1}|]$.
    
    Recall that $s[i+1,j]$ is the smallest value among $s[i,j']+|A_{i,j',j}|$ with $j'\leq k(j)$.
    Furthermore, if $j'\leq k(j-1)$, then $A_{i,j',j}$ is the same as 
    \begin{align*}
        A_{i,j',j-1} \cup \{I_{i,x} : k(j-1)<x\leq k(j)\}\cup \{I_{i+1,j-1}\}.
    \end{align*}
    If an index $j'\leq k(j)$ gives the smallest set $S[i,j']\cup A_{i,j',j}$, then either $j'>k(j-1)$ or it gives a smallest set among $S[i,j']\cup A_{i,j',j-1}$.
    Thus, we can compute the size $s[i+1,j]$ of $S[i+1,j]$ by comparing $k(j)-k(j-1)+1$ values.
    Totally, computing all $s[i+1,\cdot]$ requires $\mathcal O(|\mathcal I_i\cup \mathcal I_{i+1}|)$ time, and thus, computing all $s[\cdot,\cdot]$ takes $\mathcal O(|\mathcal I|)$ time. 
\end{clproof}

After we compute every $s[\cdot,\cdot]$, we find out the index $j^*\in [|\mathcal I_m|]$ minimizing the value $s[m,j^*]+|\mathcal I_m|-j^*$.
Then we define $j_m=j^*$ and $j_{i}$ as the index with $S[i+1,j_{i+1}]=S[i,j_i]\cup A_{i,j_i,j_{i+1}}$ for $i\in [m-1]$.
By the definition of the sets $S[\cdot,\cdot]$, the following set is the same as $S[m,j^*]\cup \{I_{m,k}\in \mathcal I_m : j_m<k\}$ that is an optimal solution of \textsc{Min PAU-VC}
\[A_{1,j_1}\cup \left( \bigcup_{i\in [m-1]} A_{i,j_i,j_{i+1}}\right) \cup \{I_{m,k}\in \mathcal I_m : j_m<k\}.\]

In conclusion, our algorithm returns a solution of \textsc{Min PAU-VC} for $G$ in linear time, and thus, Theorem~\ref{thm:unitinterval} holds.
\end{proof}
\section{Split graphs}\label{sec:split}
In this section, we describe a linear time algorithm for split graphs.
A split graph $G$ is a graph in which there exist disjoint subsets $A,B\subseteq V(G)$ such that $V(G)=A\cup B$, $A$ is a clique and $B$ is an independent set.
\begin{theorem}
    \textsc{Min PAU-VC} can be solved in linear time on split graphs.
\end{theorem}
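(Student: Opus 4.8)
The plan is to exploit the very restricted structure of minimum vertex covers of a split graph. Fix a split partition $V(G)=A\cup B$ with $A$ a clique and $B$ an independent set (found in linear time by the recognition algorithm for split graphs); if $G$ is edgeless we may take $A=\varnothing$ and output $\varnothing$, so assume $A\ne\varnothing$. For $a\in A$ write $N_B(a)=N_G(a)\cap B$; since $A$ is a clique, $|N_B(a)|=\deg_G(a)-(|A|-1)$, so all of these are computable in $\mathcal O(|V(G)|+|E(G)|)$ time. Let $d=\min_{a\in A}|N_B(a)|$. The first step is to prove, via the correspondence between minimum vertex covers and maximum independent sets together with the fact that an independent set meets $A$ in at most one vertex, that the minimum vertex covers of $G$ are exactly: (i) if $d\ge 2$, the single set $A$; (ii) if $d=1$, the set $A$ together with $C_a:=(A\setminus\{a\})\cup N_B(a)$ for every $a$ with $|N_B(a)|=1$; (iii) if $d=0$, the sets $A\setminus\{a\}$ for every $a$ with $N_B(a)=\varnothing$. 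In particular there are at most $|A|+1$ minimum vertex covers, and they are pairwise distinct, each indexed by one vertex of $A$.

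With this structure in hand, the second step is a short reduction: an optimal pre-assignment $S$ is contained in exactly one minimum vertex cover $C^{\star}$, and among all $S\subseteq C^{\star}$ the valid ones are precisely those meeting $C^{\star}\setminus C$ for every other minimum vertex cover $C$; so for a fixed target $C^{\star}$ the optimum is a minimum transversal of the set system $\{C^{\star}\setminus C : C \text{ a minimum vertex cover},\ C\ne C^{\star}\}$. Because of (i)–(iii) these difference sets are extremely simple. In case (i) the family is empty; in case (iii), with $A_0=\{a\in A:N_B(a)=\varnothing\}$, each difference $(A\setminus\{a\})\setminus(A\setminus\{a'\})=\{a'\}$ is a singleton, so the transversal is forced and any $S=A_0\setminus\{a\}$ is optimal, of size $|A_0|-1$ (which is $0$ when $|A_0|=1$). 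In case (ii), with $A_1=\{a\in A:|N_B(a)|=1\}$ and $A_1^{(b)}=\{a\in A_1:N_B(a)=\{b\}\}$ for $b\in B$, forcing the target $A$ has all difference sets $A\setminus C_a=\{a\}$ singletons, costing $|A_1|$; forcing the target $C_a$ with $a\in A_1^{(b)}$ gives difference sets $C_a\setminus A=\{b\}$, and $C_a\setminus C_{a'}$ equal to $\{a'\}$ if $a'\in A_1^{(b)}$ and to $\{a',b\}$ otherwise, so the transversal $\{b\}\cup(A_1^{(b)}\setminus\{a\})$ is optimal, of size $|A_1^{(b)}|$. Since $\min_{b}|A_1^{(b)}|\le|A_1|$, an optimal solution for \textsc{PAU-VC} therefore has size $\min\{|A_1^{(b)}|:b\in B,\ A_1^{(b)}\ne\varnothing\}$ in case (ii), obtained by picking $b$ minimizing $|A_1^{(b)}|$, and $|A_0|-1$ in case (iii), and $0$ in case (i).

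It remains to bound the running time: recognizing the split graph and computing the split partition is linear; computing all $|N_B(a)|$ and hence $d$, $A_0$, $A_1$ is linear; bucketing the vertices of $A_1$ by their unique neighbour in $B$ (scan the incident edges of each $a\in A_1$) is $\mathcal O(|E(G)|)$; and assembling the output set, whose size is $\mathcal O(|V(G)|)$, is linear. Hence the whole algorithm runs in $\mathcal O(|V(G)|+|E(G)|)$ time.

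The main obstacle is the $d=1$ case: one must enumerate precisely which minimum vertex covers are destroyed by each candidate pre-assigned vertex, handle the coincidence that several clique vertices may share the same private neighbour in $B$ (which is exactly what makes the difference sets $C_a\setminus C_{a'}$ two-element rather than singletons), and observe that forcing one of the ``small'' covers $C_a$ can be strictly cheaper than forcing $A$. A minor additional point is that the split partition need not be unique; this causes no difficulty, since \textsc{PAU-VC} is a graph invariant and the computed answer is independent of the chosen partition, but it should be noted when stating the characterization of minimum vertex covers.
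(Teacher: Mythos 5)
Your proof is correct and reaches exactly the same answers as the paper ($\varnothing$ when no clique vertex is forced out, $A_0\setminus\{a\}$ when some clique vertex has no neighbour in the independent side, and $\{b\}\cup(N_G(b)\cap A_1)\setminus\{a\}$ for the $b$ minimising $|A_1^{(b)}|$ otherwise), but it is organised differently. The paper first performs a graph reduction -- Claim 5.2 shows every clique vertex with at least two neighbours in $B$ lies in every minimum vertex cover, so such vertices (and resulting isolated vertices) are deleted -- and then argues optimality of the candidate set by a direct exchange argument (Claim 5.4: any forcing set $T$ contains at most one vertex of $B$ and can omit at most one vertex of $N_G(b)$, hence $|T|\ge |N_G(b^\ast)|$). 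You instead skip the reduction, classify \emph{all} minimum vertex covers of a split graph explicitly according to $d=\min_a|N_B(a)|$ (there are at most $|A|+1$ of them), and recast the problem as choosing a target cover $C^\ast$ and computing a minimum transversal of the difference sets $C^\ast\setminus C$; since every difference is a singleton or a pair containing the distinguished vertex $b$, the transversal is forced. The two arguments encode the same combinatorial facts -- the paper's Claim 5.4 is essentially your transversal computation for the specific family of differences -- but your enumeration-plus-hitting-set framing makes the lower bound mechanical and dispenses with the need to justify that the preprocessing step preserves the set of solutions, at the cost of a slightly longer case analysis (in particular the observation that $C_a\setminus C_{a'}$ is a pair exactly when $a$ and $a'$ have distinct private neighbours, which the paper's reduction hides). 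Both run in linear time for the same reasons.
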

\begin{proof}
Let $G$ be a split graph with a partition $(A,B)$ of its vertex set such that $A$ is a clique and $B$ is an independent set.
Observe that a minimum vertex cover excludes at most one vertex from $A$, and furthermore, $A$ is a vertex cover of $G$.
We claim that it is safe to remove every vertex in $A$ that has at least two neighbors in $B$, and also to remove all isolated vertices. This can be done in linear time. 
\begin{claim}\label{claim:splittwoneighbor}
    If $v\in A$ has at least two adjacent vertices in $B$, then every minimum vertex cover of $G$ includes $v$.
\end{claim}
\begin{clproof}
    For a contradiction, suppose that there is a minimum vertex cover $X$ in $G$ excluding such a vertex $v$. We show that $|X|$ is strictly larger than $|A|$.
    This directly implies the claim.
    As $A$ is a clique in $G$,
    $X$ excludes only $v$ in $A$.
    Furthermore, $X$ includes all vertices in $B$ adjacent to $v$, where there are at least two such vertices by the assumption. 
    Thus, $|X|\ge |A|+1$.
\end{clproof} 
\begin{figure}[t]
\centering
\includegraphics[width=0.5\columnwidth]{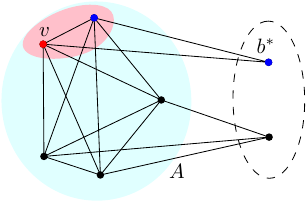} %
\caption{$(N_G(b^*)\setminus \{v\})\cup \{b^*\}$ is the minimum vertex set so that exactly one minimum vertex cover $(A\setminus \{v\})\cup \{b^*\}$ in the split graph includes it.}
\label{fig:split}
\end{figure}

In the following, suppose that every vertex in $A$ has at most one adjacent vertex in $B$ and there is no isolated vertex.
Let $A_0$ be the set of vertices in $A$ which has no adjacent vertex in $B$. 
We first consider the case that $A_0$ is not empty.
In such a case, a minimum vertex cover of $G$ has size $|A|-1$. Furthermore, 
any minimum vertex cover is of the form $A \setminus {v}$ for some $v \in A_0$.
Thus, $A_0\setminus \{v\}$ is an optimal solution of \textsc{Min PAU-VC} for an arbitrary vertex $v\in A_0$.

In the following, we consider the other case that $A_0=\emptyset$.
In this case, the size of a minimum vertex cover of $G$ is~$|A|$. See Figure~\ref{fig:split}. 

For each $a\in A$, let $v_a$ be the vertex of $B$ that is adjacent to $a$.
Observe that for each $a\in A$, $(A\setminus \{a\})\cup \{v_a\}$ is also a minimum vertex cover.
We find a vertex $b^*\in B$ minimizing $N_G(b^*)$, and return $(N_G(b^*)\setminus \{v\})\cup \{b^*\}$ as a solution of \textsc{Min PAU-VC}, where $v$ is an arbitrary vertex in $N_G(b^*)$. 

\begin{claim}\label{claim:split}
    Let $b^*\in B$ such that $|N_G(b^*)|$ is minimum, and let $v\in N_G(B^*)$. Then
    $(N_G(b^*)\setminus \{v\})\cup \{b^*\}$ is a solution of \textsc{Min PAU-VC}.
\end{claim}
\begin{clproof}
    Figure~\ref{fig:split} illustrates this proof.
    First, we show that $S^*=(N_G(b^*)\setminus \{v\})\cup \{b^*\}$ forces a unique minimum vertex cover in $G$.
    Let $X$ be a minimum vertex cover of $G$ containing $S^*$.
    
    As $b^*\in X$, there is a vertex $a$ in $A$ excluded by $X$ due to $|X|=|A|$. If $a$ is not in $N_G(b^*)$, then the edge $av_a$ is not covered by $X$, a contradiction. Thus, $a=v$ and 
    $X=(A\setminus \{v\})\cup \{b^*\}$ is the unique minimum vertex cover in $G$ including $S^*$.

    Let $T$ be a minimum vertex set of $G$ forcing a unique minimum vertex cover in $G$. 
    We show that $|T|\ge |S^*|$.
    Note that there is no minimum vertex cover of $G$ containing two vertices in $B$.
    Thus, $T$ includes at most one vertex in $B$.
    If $T$ does not contain any vertex in $B$ and does not contain a vertex $u$ in $A$, then there are two minimum vertex covers $A$ and $(A\setminus \{u\})\cup \{v_u\}$, a contradiction.
    Thus, if $T$ does not contain any vertex in $B$, then $T=A$, and therefore, $|T|\ge |S^*|$.

    Suppose that $T$ contains exactly one vertex $b$ in $B$.
    In such a case, if $T$ excludes two distinct vertices $u$ and $u'$ of $N_G(b)$, then there are two minimum vertex covers $(A\setminus \{u\})\cup \{b\}$ and $(A\setminus \{u'\})\cup \{b\}$ in $G$ including $T$.
    Therefore, $T$ excludes at most one vertex in $N_G(b)$.
    As we chose $b^*$ in $B$ with minimum $|N_G(b^*)|$, we have 
    \[|T|\ge (|N_G(b)|-1)+1\ge |S^*|,\]
    as required.
\end{clproof}

In conclusion, we can find a solution of \textsc{Min PAU-VC} for a split graph $G$ by checking the number of neighbors for each vertex in $B$. Thus, it takes $\mathcal{O}(|V(G)|)$ time, because every vertex in $A$ has at most one neighbor in $B$.
\end{proof}

\section{Conclusion}\label{sec:conclusion}
In this paper, our main contributions are three-fold: a fixed-parameter tractable algorithm for \textsc{Min PAU-VC} parameterized by clique-width,
and linear-time algorithms for unit interval graphs and split graphs. 
In particular, the first algorithm improves the best-known algorithm for \textsc{PAU-VC} on trees significantly. 
We believe that these algorithms can be used to generate benchmark datasets for evaluating the performances of AI algorithms on the unique vertex cover problem. 

There are still lots of open problems in this topic. Can we design polynomial-time algorithms for interval graphs, chordal graphs, or perfect graphs? It is known that these graph classes admit polynomial-time algorithms for the minimum vertex cover problem~\cite{grotschel1981ellipsoid}. Can we reduce the dependency on clique-width to be single-exponential, or can we show that our algorithm is optimal? Recall that  our algorithm runs in time double exponential in the clique-width of a graph. Although the running time seems large, it is still possible that our algorithms are optimal; there are several problems with lower bounds that are double exponential in the tree-width or clique-width~\cite{marx2016double,golovach2018cliquewidth,FoucaudKLMS2024,BliznetsH2024}.

Lastly, one may ask whether approximation algorithms can be designed for $\textsc{Min PAU-VC}$.
For bipartite graphs, we can show that there is no polynomial-time algorithm even for constant-factor approximation, unless $\mathrm{P}=\mathrm{NP}$. This follows by combining the reduction of Horiyama et al.~\cite{horiyama2024theoretical} with the hardness result of Irving~\cite{IRVING1991197}. 
Horiyama et al.\ proved that \textsc{PAU-VC} is NP-complete on bipartite graphs by a reduction from \textsc{Minimum Independent Dominating Set} on bipartite graphs while preserving the minimum value.
Since the latter problem admits no polynomial-time constant approximation on bipartite graphs unless $\mathrm{P}=\mathrm{NP}$~\cite{IRVING1991197}, it also rules out any polynomial-time constant-factor approximation for \textsc{Min PAU-VC} on bipartite graphs.

\end{document}